\documentclass [11pt,a4paper]{article}

\usepackage{booktabs}
\usepackage{rotating}

\usepackage{epstopdf}% To incorporate .eps illustrations using PDFLaTeX, etc.
\usepackage{subfigure}% Support for small, `sub' figures and tables

\usepackage{times}
\usepackage[sans]{dsfont}
\usepackage{amscd}
\usepackage{underscore}
 \usepackage{xcolor}

\linespread{1.2}

\usepackage{mathrsfs}

%%% fonts MT CMR
\usepackage[leqno]{amsmath} %%nosumlimits
\usepackage{amsfonts}
\usepackage{amsthm}
\usepackage{latexsym}
\usepackage{amssymb}
\usepackage[scaled=0.9]{helvet}
\usepackage{mathptmx}

\usepackage{graphicx}  %[dvips]
\usepackage{ifthen}

\textwidth16cm \textheight23cm \topmargin-1cm \oddsidemargin0cm
\evensidemargin0cm
\parskip1.5ex
\frenchspacing

\newcommand{\rio}{\mathds{R}}    %%%% simboli  numeri reali e sottoinsiemi
\newcommand{\nio}{\mathds{N}}    %%%% con doppia barra solo a sn

\newcommand{\emdf}{\mathds{F}}

\newcommand{\plim}{\stackrel{\proba}{\rightarrow}}

\newcommand{\ind}{\mathbf{I}}
\newcommand{\proba}{\mathsf{P}}

\newcommand{\mean}{\mathsf{E}}

\newcommand{\sd}{\mathsf{sd}}

\newcommand{\ud}{\mathrm{d}}

\newcommand{\ssd}{\succcurlyeq_{\text{SSD}}}

\newcommand{\siga}{\mathscr{F}} %% \mathfrak{F},\mathscr{F}, \mathcal{F}
  %% \mathfrak{B}, \mathscr{B}
  %% \mathfrak{B}_{(0,1]}, \mathscr{B}_{(0,1]}
  %% \mathfrak{B}^2, \mathscr{B}^2
  %% \mathfrak{B}^d, \mathscr{B}^d
%%\newcommand{\law}[1]{{scr{L}(#1)}}
\newcommand{\slaw}{\stackrel{\text{law}}{=}}
\providecommand{\norm}[1]{\lVert#1\rVert}

\newcommand{\VaR}{\mathrm{VaR}}

\newcommand{\AVaR}{\mathrm{AV@R}}

\newcommand{\CR}{\mathrm{CR}}
\newcommand{\SR}{\mathrm{SR}}

\newtheorem{theorem}{Theorem}
\newtheorem{lemma}{Lemma}
\newtheorem{prop}{Proposition}

\newtheorem{remark}{Remark}

%%%%%%%%%%%%%

\newcommand{\defin}{\fontfamily{ptm}\fontseries{b}
\fontsize{12}{12}\selectfont}

\newtheoremstyle{mystyle1}{}{}{\normalfont}%
{0cm}{\defin}{.}{0.2cm}{}

\theoremstyle{mystyle1}

\newtheorem{defi}{Definition}

%%%%%%%%%%%%%

\newcommand{\rmk}{\fontfamily{ptm}\fontseries{m}\fontshape{sc}
\fontsize{12}{12}\selectfont}

\newtheoremstyle{mystyle2}{}{}{\normalfont}%
{0cm}{\rmk}{.}{0.2cm}{}

\theoremstyle{mystyle2}

\begin{document}

\title{Acceptability Indices of Performance for Bounded C\`{a}dl\`{a}g Processes}
\author{Christos E. Kountzakis \thanks{%
Department of Mathematics, University of the Aegean, 83200 Samos Greece, email: {\tt chr_koun@aegean.gr} },
Damiano Rossello \thanks{%
\textit{Corresponding Author}: Department of Economics and
Business, University of Catania, 95129 Catania,
Italy, email: {\tt rossello@unict.it} }
}

\maketitle

\begin{abstract}
\noindent Indices of acceptability are well suited to frame the axiomatic features of many performance measures, associated to terminal random cash flows. We extend this notion to classes of c\`{a}dl\`{a}g processes modelling cash flows over a fixed investment horizon. We provide a representation result for bounded paths. We suggest an acceptability index based both on the static Average Value-at-Risk functional and the running minimum of the paths, which eventually represents a RAROC-type model. Some numerical comparisons clarify the magnitude of performance evaluation for processes.\\

\noindent \textsc{Key words:}  Acceptability Indices; C\`{a}dl\`{a}g Processes; Banach Lattice; Intra-Horizon Risk; Performance Measures for Processes; RAROC\\

        %%%%\subclass{06B99 \and 91G99 \and 91G60 \and 60G07}\\
\noindent \textsc{JEL Classification} C02 $\cdot$ C63 $\cdot$ G17	$\cdot$ G21 $\cdot$ G23
\end{abstract}

%-------------------------------------------paragrafo1----------------------------------------------------------

\section{Introduction}

 \noindent A financial performance aims to evaluate the return characteristics of a given investment portfolio, and to fulfill the risk and asset allocation constraints provided by investors. The resulting measurement can be used to judge the quality of managerial skillfulness, since fund managers viewed as competitors should be able to process those piece of information not reflected by market prices, then providing an actual value-added service. The balance between reward and risk is condensed into a \emph{performance measure}
 as the popular Sharpe ratio (SR), which can be used to rank investment portfolios according to these two characteristics. Other performance measures are alternative to SR, accounting for stylized facts about financial returns such as asymmetrical and fat-tailed distributions. See
 \cite{AmLeSou:2003} for a textbook treatment of several performance measures.

 \noindent The unified mathematical framework of Cherny \& Madan (2009) deals with some desirable features many performance measures share. Given a set of (at least) four properties, they identify \emph{indices of acceptability} of nonnegative financial performance for a given terminal random cash flow. The basic idea is intimately linked to the well known analysis in Atzner et al. (1999) of coherent risk measure and their acceptance sets, and the subsequent analysis in Carr et al. (2001) of arbitrage pricing and acceptance hedging: There is a continuum of degrees of acceptability of a position, varying with different levels of stressed scenarios supporting positive expectations of cumulative terminal cash flow. Thanks to the duality between coherent risk measures and performance measures, a class of equivalent probability measures, or the corresponding set of their Radon-Nykodim derivatives, give the acceptability of a trade's cash flow. In other words, an index of acceptability derived from a coherent risk measure must be proportional to the amount of stress tolerated and must yield nonnegative values exceeding floors, so that nonnegative expected cash flows are attained. This fits well the regulatory-capital requirements and the pertaining economic-capital modelling used in practice, for example by banks undertaking ex ante improvements in business-performance tracking through the use of risk adjusted return on capital (RAROC), as the ratio of expected final return to the economic capital measured by a coherent risk measure or by the Value-at-Risk (VaR). Not surprisingly, an acceptability index (AI) may be expressed just in ratio form, as the forerunner static SR.

\noindent On the other hand, the industry of fund management claims the use of performance measures such as the Calmar ratio (CR) designed to account for the risk associated to cash flow resulting from the whole investment's path over a fixed horizon. If this is the case, SR is redefined by set the reward measure in the numerator equal to the expected terminal cash flow, and using the expected maximum drawdown over the horizon as a risk measure in the denominator. The risk-adjusted return then takes into account the future evolution of the market value of the position, not just the terminal one. Albeit this kind of performance measure is widespread among practitioners, it cannot be directly placed into the realm of static AIs studied in \cite{ChernyMad:2009}. Static risk measures does not embed the cash flow's path experienced over time. On the other hand, dynamic risk measures have been developed to account for this. First, static risk measures are turned into conditional ones to account for the information available at the risk assessment. Then, on some filtered probability space risk assessment is updated as time elapses and new information arrives, so that a sequence of conditional risk measures depicts a dynamical framework based on different notions of \emph{time consistency}, see \cite{AccPenn11} and the references therein for a detailed review. It is possible to turn things around, and define a coherent monetary risk measure as in Cheridito et al. (2004), yielding a numerical evaluation rather than a sequence of conditional risk measures (random variables). For the special case of a finite sequence of adapted cash flows see \cite[Section 3.2]{PflugRom07}.

\noindent In the present paper we provide a framework for AIs of performance put in duality with coherent monetary risk measures for bounded c\`{a}dl\`{a}g processes. To recover the information lost in the static setting, one records all possible stressed scenarios during the holding period of a financed position, which is represented by a path rather than a random variable. However, we do not develop a dynamic setup to process information, rather we give a representation based on a static index acting on processes which preserves
the main features of an AI: Quasi-concavity, positive homogeneity, monotonicity and Fatou continuity. Eventually, this includes the one-period AI as a particular case. Furthermore, we propose an example of AI for processes related to the one-time step RAROC. Our contribution is similar to that of Bielecki et al. (2014), or the more recent Bielecki et al. (2017), which define coherent risk and AIs for paths in a newly proposed dynamical setting. These authors studied the duality between \emph{dynamic coherent risk measures} $\rho:\{0, \ldots,T\} \times \mathscr{D} \times \Omega \to \rio$  and \emph{dynamic coherent acceptability indices} $\rho:\{0, \ldots,T\} \times \mathscr{D} \times \Omega \to [0, \infty],$ where $\{0, \ldots,T\}$ is a set of dates and $\mathscr{D}$ is a set of (adapted) real-valued stochastic processes modelling cash flows. These authors impose two additional properties to $\alpha$ and $\rho$ called \emph{independence of the past} and \emph{dynamic consistency}. Nevertheless, this framework cannot deal with performance measures such as the aforementioned CR. Our contribution enable us to overcome this limitation, and to deal with path-transformations like taking the maximum drawdown or the running minimum of a cash flow's path. Eventually information about time resolution of an investment process (from the spatial viewpoint) is not lost. On the generalization of the AIs to the continuous-time setting see Biagini and Bion-Nadal (2015).

\noindent The paper proceeds as follows. Section \ref{sec:AIoP} introduces the essential toolkit for treating some classes of c\`{a}dl\`{a}g processes as models of total cumulative cash flows evolving within a finite horizon, and defines the AIs of performance in this extended framework. The duality concerning such classes are briefly reviewed, together with additional results on their lattice structure used in proving the main representation of this paper. Section \ref{sec:RepreTH} is devoted to the generalization of the static AI of \cite{ChernyMad:2009} from the domain $L^{\infty}$ to  the collection of bounded c\`{a}dl\`{a}g processes. The new AI is obtained in a straightforward manner, by properly combining the contributions \cite{ChernyMad:2009,CheriEtAl:2004}. The multi-period analogue of the static system of supporting kernels is depicted in terms of bi-variate processes reproducing the Radon-Nikodym derivatives corresponding to a static system of kernels. Section \ref{sec:Further-Pr1} deals with the analysis of second order stochastic dominance compatible with the extended AI. Section \ref{sec:Arbitr-Expec} studies its arbitrage and expectation consistency. Section \ref{sec:Examples} contains the main example of an AI for processes based on the one-period RAROC, as the ratio of the expected terminal cumulative cash flow to the Average Value-at-Risk of the cash flow's running minimum. Section \ref{sec:Examples} provides numerical comparisons of simulated values in order to appreciate the magnitude of  AIs and other performance measures. Section \ref{sec:Concl} contains some concluding remarks.

%-------------------------------------------paragrafo1----------------------------------------------------------
\section{Notations and Preliminary Results}\label{sec:AIoP}

In this paper we model the whole evolution of financial outcomes over a finite time-interval rather than the terminal cumulative cash flow typically handled in performance analysis. Here $X=(X_t)_{t \in [0,T]}$ is the stochastic process modeling the random cash flow resulting from dynamic trading over the investment horizon $[0,T],$ where $T \geqslant 0.$ We are given a filtered probability space $(\Omega,\siga, (\siga_t)_{t \in [0,T]}, \proba)$ satisfying the usual conditions, i.e., the basis space $(\Omega,\siga,\proba)$ is complete, the filtration $(\siga_t)_{t \in [0,T]}$ is right-continuous, and the initial information $\siga_0$ contains all the $\proba$-null events of $\siga.$ Almost surely equal random variables are identified as well as indistinguishable processes on the filtered space, $X_t(\omega)=Y_t(\omega)$ for almost all $\omega \in \Omega$ and all $t \geqslant 0.$ Comparisons among processes are understood in the latter sense. For example, $X \geqslant Y$ means that $X_t$ is greater than or equal to $Y_t,$ for all dates $t$ and for almost all $\omega.$ As usual we set $L^p:=L^p(\Omega, \siga, \proba)$ and following \cite{CheriEtAl:2004,DellachMeyer:1982} we denote $\mathscr{R}^0$ the vector space of (the $\proba$-a.s. and for every $t \in [0,T]$ equivalence classes of) c\`{a}dl\`{a}g processes that are adapted to the filtration. For the characterization we develop in Section \ref{sec:RepreTH}, the model $X$ of a (discounted) cash flow evolving within the horizon is that of a bounded c\`{a}dl\`{a}g process belonging to the stricter class $\mathscr{R}^{\infty},$ i.e. $X^{*}:=\sup_{t \in [0,T]} |X_t| \in L^{\infty}.$
This is a Banach space equipped with the norm $\left\| X\right\|_{\mathscr{R}^{\infty}}:=\left\| X^{*} \right\|_{\infty},$  where $\left\| \cdot \right\|_{\infty}$ is the usual norm on $L^{\infty}.$
\begin{defi}\label{defi:alpha-phi}
    A performance measure $\alpha : \mathscr{R}^{\infty} \to \rio$ is an AI for processes if it satisfies the following properties:
    \begin{itemize}
      \item[(1)] Acceptable cash flows at a level $x \geqslant 0$ form a convex above-level set
                    $$\mathscr{A}_x := \left\{ X \in \mathscr{R}^{\infty} \big| \alpha(X) \geqslant x\right\}.$$
                 In the current context, this is a family describing acceptability for any level $x.$ The convexity of any
                 $\mathscr{A}_x$ is equivalent to the quasi-concavity of $\alpha,$ namely $\alpha(\lambda X + (1-\lambda)Y) \geqslant x$
                 for any $\lambda \in [0,1]$ provided $X,Y$ are such that $\alpha(X) \geqslant x$ and $\alpha(Y) \geqslant x.$
                 Taking $x = \min\{\alpha(X),\alpha(Y)\}$ quasi-concavity implies that a diversified position performs better than its components.
      \item[(2)] Acceptable cash flows are valued monotonically,
                    $$X \leqslant Y \Rightarrow \alpha(X) \leqslant \alpha(Y),$$
                 thus $\alpha$ is an increasing map and $Y$ is at least as acceptable as $X.$
      \item[(3)] The acceptance set $\mathscr{A}_x$ is required to be a convex cone, because $\alpha$ is not meant to be an investment criterion
                 but rather it measures to what extent moving away from marginal trades supporting the random cash flow $X$ results in a new
                 \emph{investment direction} based on alternative pricing kernels. Hence
                    $$\alpha(\lambda X)= \alpha(X), \quad \text{for}\;\; \lambda >0,$$
                 i.e. \emph{scale invariance} is required and the performance of an investment should not depend upon the initial endowment.
                 In other words, $\lambda X$ is based on a trade in the same direction of $X,$ and then it has the same level of acceptance.
      \item[(4)] The acceptability functional is required to be upper Fatou-continuous for $X \in \mathscr{R}^{\infty}$,
                        $$\limsup_{n \to \infty} \alpha(X_n) \leqslant \alpha(X),$$
                 and for every bounded sequence $(X^n)_{n \in \nio} \subset \mathscr{R}^{\infty}$ of paths such that $(X^n)^{*}$ converges in probability to $X^{*},$ i.e. $(X^n -X)^{*} \plim 0.$ This implies that $\alpha(X) \geqslant x$ provided that $\alpha(X^n) \geqslant x$
                 for every $n \in \nio$ and $x \geqslant 0.$ The bounded sequence can be taken as $\left\| X^n \right\|_{\mathscr{R}^{\infty}} \leqslant 1$
                 uniformly.
    \end{itemize}
\end{defi}
\noindent To obtain the representation results of Section \ref{sec:RepreTH}, we need the duality relations concerning spaces of c\`{a}dl\`{a}g processes, as well studied more recently in \cite{CheriEtAl:2004} and classically in \cite{DellachMeyer:1982}. We also need the lattice structure of some of such spaces which enforces the involved duality. Recall that a Riesz space is an ordered vector space which is also a lattice, where the norm and the absolute value are in general different. If the norm is in addition monotone in the absolute value of a vector, then it is a lattice norm and completeness entails a Banach lattice. Two special classes of Banach lattices are the AM-spaces and the AL-spaces, based on the following definitions of lattice norms:
\begin{itemize}
  \item $M$-norm, whenever $X,Y \geqslant 0$ implies $\norm{\sup\{X,Y\}}=\max \{\norm{X},\norm{Y}\};$
  \item $L$-norm, whenever $X,Y \geqslant 0$ implies $\norm{X+Y}=\max \{\norm{X},\norm{Y}\}.$
\end{itemize}
Thus, a norm complete Riesz space equipped with an $M$-norm is an AM-space, while a norm complete Riesz space equipped with an $L$-norm is an AL-space. A useful result states that a Banach lattice is an AM-space (resp. an AL-space) if and only if its dual is an AL-space (resp. an AM-space), see \cite[Chs 8, 9]{AB:2006} for more details. The class $\mathscr{R}^p,$ for $p \in [1,\infty],$ generalizes $\mathscr{R}^{\infty}$
to those c\`{a}dl\`{a}g processes such that $X^* \in L^p_+.$ This is also a Banach space with norm $\left\| X\right\|_{\mathscr{R}^p}:=\left\| X^{*} \right\|_{p},$ see Appendix \ref{appx} for a brief review.
\begin{prop}\label{prop:partial-order-R-inf}
$\mathscr{R}^{p}$ is a Banach lattice, for $p \in [1,\infty].$ Moreover, $\mathscr{R}^{\infty}$ is an AM-space with order unit.
\end{prop}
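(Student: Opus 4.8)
The plan is to construct the Banach lattice structure pathwise and then push the relevant $L^p$-estimates through the definition $\|X\|_{\mathscr{R}^p}=\|X^*\|_p$, treating the completeness of $\mathscr{R}^p$ as already recorded in the text (and reviewed in Appendix \ref{appx}). First I would define the candidate lattice operations pointwise: for $X,Y\in\mathscr{R}^p$ set $(X\vee Y)_t(\omega):=\max\{X_t(\omega),Y_t(\omega)\}$ and $(X\wedge Y)_t:=\min\{X_t,Y_t\}$. Two things must be checked. The operations keep us inside the ambient class $\mathscr{R}^0$: the pointwise maximum of two càdlàg paths is again càdlàg, since right-continuity and the existence of left limits are preserved by the continuous map $\max$, and $\max$ of adapted processes is adapted; moreover the operations respect indistinguishability, so they descend to the equivalence classes forming $\mathscr{R}^p$. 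Then $X\vee Y$ is genuinely the least upper bound in the order of $\mathscr{R}^p$: any $Z$ with $Z\geqslant X$ and $Z\geqslant Y$ satisfies $Z_t\geqslant\max\{X_t,Y_t\}$ for all $t$ a.s., which is exactly the supremum property. This makes $\mathscr{R}^p$ a Riesz space with absolute value $|X|_t=|X_t|$.

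Next I would verify that $\|\cdot\|_{\mathscr{R}^p}$ is a lattice norm. The key observation is that $|X|^*=\sup_{t}\,\bigl||X_t|\bigr|=\sup_t|X_t|=X^*$, so $\||X|\|_{\mathscr{R}^p}=\|X\|_{\mathscr{R}^p}$. If $|X|\leqslant|Y|$ in the Riesz order, then $|X_t|\leqslant|Y_t|$ for all $t$ a.s.; taking suprema over $t$ gives $X^*\leqslant Y^*$ pointwise in $\omega$, and monotonicity of the $L^p$-norm yields $\|X\|_{\mathscr{R}^p}=\|X^*\|_p\leqslant\|Y^*\|_p=\|Y\|_{\mathscr{R}^p}$. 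Combined with norm-completeness, this establishes that $\mathscr{R}^p$ is a Banach lattice for every $p\in[1,\infty]$.

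For the second assertion I would specialize to $p=\infty$ and confirm the $M$-norm axiom together with the existence of an order unit. For $X,Y\geqslant 0$ nonnegativity removes the absolute values and $\sup_t$ commutes with $\max$, giving $(X\vee Y)^*=X^*\vee Y^*$; hence $\|X\vee Y\|_{\mathscr{R}^\infty}=\|X^*\vee Y^*\|_\infty=\max\{\|X^*\|_\infty,\|Y^*\|_\infty\}$, the last equality being precisely the $M$-norm property of $L^\infty$. This is the defining $M$-norm identity. Finally, the constant process $e$ with $e_t(\omega)\equiv 1$ lies in $\mathscr{R}^\infty$ with $\|e\|_{\mathscr{R}^\infty}=1$ and is an order unit: for any $X$ one has $|X_t(\omega)|\leqslant X^*\leqslant\|X^*\|_\infty$ for all $t$ and a.a.\ $\omega$, so $|X|\leqslant\|X\|_{\mathscr{R}^\infty}\,e$; equivalently the closed unit ball of $\mathscr{R}^\infty$ equals the order interval $[-e,e]$, which identifies $\|\cdot\|_{\mathscr{R}^\infty}$ with the order-unit norm. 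Therefore $\mathscr{R}^\infty$ is an AM-space with order unit $e$.

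I expect the one genuinely delicate point to be the measurability and well-definedness underlying the lattice operations: because the index set $[0,T]$ is uncountable, one must invoke right-continuity to replace $\sup_{t\in[0,T]}$ by a supremum over a countable dense set of times, so that $X^*$ is measurable and the order can be tested on that countable skeleton, exactly as in the Dellacherie--Meyer framework \cite{DellachMeyer:1982,CheriEtAl:2004}. Once this reduction is in place, every $L^p$-manipulation above is routine.
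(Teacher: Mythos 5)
Your proposal is correct, and for the first half of the statement it follows the paper's own route: both arguments use the pathwise partial order, deduce $X^{*}\leqslant Y^{*}$ from $|X|\leqslant |Y|$, and get the lattice-norm property from monotonicity of $\|\cdot\|_{p}$, taking norm-completeness of $\mathscr{R}^{p}$ as given. You do, however, prove strictly more than the paper in two places, and the second is a genuine difference of route. First, you verify the Riesz-space structure itself --- that $(X\vee Y)_t=\max\{X_t,Y_t\}$ remains c\`{a}dl\`{a}g and adapted, descends to indistinguishability classes, and is the least upper bound --- whereas the paper posits the order and checks only norm monotonicity, leaving the existence of lattice operations implicit. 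Second, for the AM claim the paper argues solely that $\mathbf{1}$ is an order unit, via $\mathscr{R}^{\infty}=\bigcup_{n}[-n\mathbf{1},n\mathbf{1}]$ (and that passage contains a garbled inclusion); an order unit alone does not make a Banach lattice an AM-space --- $\rio^{n}$ with the $\ell^{p}$-norm, $p<\infty$, has an order unit but its norm is not an $M$-norm. You instead verify the $M$-norm axiom directly from $(X\vee Y)^{*}=X^{*}\vee Y^{*}$ for $X,Y\geqslant 0$ together with the $M$-norm property of $L^{\infty}$, and you identify the closed unit ball of $\mathscr{R}^{\infty}$ with the order interval $[-e,e]$, so the norm is the order-unit norm; this supplies exactly the half of the argument the paper omits. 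Your closing caveat on measurability of $X^{*}$ (reduction to a countable dense skeleton of times by right-continuity, as in Dellacherie--Meyer) is likewise a real point the paper leaves tacit. In short: same skeleton as the paper for the Banach-lattice assertion, but a more complete --- and for the AM-space assertion, a more nearly airtight --- argument.
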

\begin{proof}
On $\mathscr{R}^{p}$ let us consider the partial ordering
$X \geqslant Y \Leftrightarrow X_{t}(\omega) \geqslant Y_{t}(\omega)$ for any $t \in [0,T]$, and for $\proba$-almost all $\omega \in \Omega.$ Using this partial ordering, if $X, Y \in \mathscr{R}^{p}$ and $|X| \geqslant |Y|$, which implies that  $|X_{t}(\omega)| \geqslant |Y_{t}(\omega)|$ for any $t \in [0,T]$, and for $\proba$-almost all $\omega \in \Omega,$ this also implies that the random variables $X^{*}, Y^{*} \in L^p_+$ satisfy
the inequality
    $$\| X\|_{\mathscr{R}^{p}} = \left\| X^{*} \right\|_{p} \geqslant \left\| Y^{*} \right\|_{p}
    =\| Y\|_{\mathscr{R}^{p}}.$$
Hence, $\mathcal{R}^{p}$ is a Banach lattice for $p \in [1,\infty].$ To show that  $\mathscr{R}^{\infty}$ is an AM space with unit, it suffices to prove that such an order unit is the stochastic process $\textbf{1}=(\textbf{1}_{t})_{t \in [0,T]}$, where $\textbf{1}_{t}(\omega)=1, \proba$-a.s. In order to prove it, we have to show that $\mathcal{R}^{\infty}=\cup_{n=1}^{\infty}[-n\textbf{1}, n\textbf{1}]$, where $[-n\textbf{1}, n\textbf{1}]$ denotes the following order interval of $\mathscr{R}^{\infty}$: $[-n\textbf{1}, n\textbf{1}]=\{X \in \mathcal{R}^{\infty}: n\textbf{1} \geqslant X \geqslant -n\textbf{1}\}$, with respect to the partial ordering of $\mathcal{R}^{\infty}$ defined above. The inclusion $|X_{t}| \geqslant \sup_{t \in [0,T]}|Y_{t}|$ in $L^{\infty}$, and consequently $\cup_{n=1}^{\infty}[-n\textbf{1}, n\textbf{1}] \subseteq \mathscr{R}^{\infty}$ is obvious. For the opposite inclusion, for any $X \in \mathscr{R}^{\infty}$, we may define the stochastic process $Y \in \mathscr{R}^{\infty}$, such that $Y \geqslant X$ and moreover $Y_{t}=([ |X_{t}| +1 ])\textbf{1}_{t}$.
\end{proof}
\noindent Another class of Banach lattices related to the geometry of c\`{a}dl\`{a}g processes is $\mathscr{A}^q,$ for $q \in [1,\infty],$ containing the bi-variate processes $A : [0,T] \times \Omega \to \rio^2$ such that $A=(A^{\text{pr}}_t,A^{\text{op}}_t)_{t \in [0,T]}$ has right-continuous coordinates with finite variation, $A^{\text{pr}}$ being predictable with $A^{\text{pr}}_0=0,$ while $A^{\text{op}}$ being optional and purely discontinuous, see Appendix \ref{appx}. We have $\text{Var}(A^{\text{pr}})+\text{Var}(A^{\text{op}})\in L^q,$ where $\text{Var}(\cdot)$ is he usual variation of a process. The related positive cone $\mathscr{A}^q_+$ contains those bi-variate processes $A \in \mathscr{A}^q$ such that $A^{\text{pr}},A^{\text{op}} \geqslant 0$ and increasing. The base of this cone is defined as $\mathscr{B}^q_+:= \left \{ A \in \mathscr{A}^q_+ \big| \,\langle \mathbf{1},A \rangle >0 \right \},$
where $\mathbf{1}$ is the element $X\in \mathscr{R}^p$ such that $X^*(\omega)=\mathbf{1}(\omega)=1,$ $\proba$-a.s. The partial ordering implied by $\mathscr{A}^q_+$ on $\mathscr{A}^q$ is defined as
    $$C \geqslant D \Longleftrightarrow C-D  \in \mathscr{A}^q_+,$$
and makes $\mathscr{A}^q_+$ a Banach lattices. More generally we have:
\begin{prop}\label{prop:partial-order-A}
$\mathscr{A}^{q}$ is a Banach lattice. Moreover, $\mathscr{A}^1$ it is an AL-space.
\end{prop}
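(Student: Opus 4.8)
The plan is to reduce the entire statement to the elementary structure of $\proba$-parametrised finite signed measures, for which both the Riesz–space identities and the norm properties become transparent. First I would record the correspondence between each coordinate of $A=(A^{\text{pr}},A^{\text{op}})$ and a random signed measure on $[0,T]$: the predictable finite-variation process $A^{\text{pr}}$ with $A^{\text{pr}}_0=0$ is encoded by the predictable random measure $\mu^{\text{pr}}$ it generates on $(0,T]$, and the optional purely discontinuous $A^{\text{op}}$ by the purely atomic optional random measure $\mu^{\text{op}}$ carried by its jump times. Under this dictionary the cone $\mathscr{A}^q_+$ consists exactly of those $A$ for which $\mu^{\text{pr}},\mu^{\text{op}}$ are $\proba$-a.s.\ nonnegative, the terminal total variation $\mathrm{Var}(A^{\bullet})_T$ equals $|\mu^{\bullet}|([0,T])$, and the norm reads $\|A\|_{\mathscr{A}^q}=\big\||\mu^{\text{pr}}|((0,T])+|\mu^{\text{op}}|([0,T])\big\|_{L^q}$.

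Second I would establish that $\mathscr{A}^q$ is a Riesz space. For fixed $\omega$ the Hahn–Jordan decomposition yields $\mu^{\bullet}=\mu^{\bullet,+}-\mu^{\bullet,-}$ and thereby the candidate lattice operations, taken coordinatewise and $\omega$-wise; the absolute value $|A|$ then corresponds to the total-variation measures $|\mu^{\text{pr}}|,|\mu^{\text{op}}|$, i.e.\ to the variation processes $\mathrm{Var}(A^{\text{pr}}),\mathrm{Var}(A^{\text{op}})$. Monotonicity of the norm is immediate: if $|A|\le|B|$ in this order then $\mathrm{Var}(A^{\text{pr}})_T\le\mathrm{Var}(B^{\text{pr}})_T$ and $\mathrm{Var}(A^{\text{op}})_T\le\mathrm{Var}(B^{\text{op}})_T$ hold $\proba$-a.s., whence $\|A\|_{\mathscr{A}^q}\le\|B\|_{\mathscr{A}^q}$ by monotonicity of the $L^q$-norm, so the norm is a lattice norm. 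Completeness I would deduce from completeness of $L^q$ together with that of the space of random signed measures of $L^q$-integrable total variation, reconstructing the limiting variation measures from an $L^q$-Cauchy sequence, alternatively invoking the construction recalled in Appendix~\ref{appx} and in \cite{CheriEtAl:2004,DellachMeyer:1982}.

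For the AL-space claim I would verify the defining identity $\|A+B\|_{\mathscr{A}^1}=\|A\|_{\mathscr{A}^1}+\|B\|_{\mathscr{A}^1}$ for $A,B\in\mathscr{A}^1_+$. When both coordinates are increasing, the terminal variation is just the terminal increment, so it is additive, giving $\mathrm{Var}((A+B)^{\text{pr}})_T+\mathrm{Var}((A+B)^{\text{op}})_T=\big(\mathrm{Var}(A^{\text{pr}})_T+\mathrm{Var}(A^{\text{op}})_T\big)+\big(\mathrm{Var}(B^{\text{pr}})_T+\mathrm{Var}(B^{\text{op}})_T\big)$ $\proba$-a.s.; these are nonnegative, so additivity of $\mean$ (precisely the exponent $q=1$) yields the required norm additivity. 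An equally short dual route is available: by Proposition~\ref{prop:partial-order-R-inf} the space $\mathscr{R}^{\infty}$ is an AM-space, hence by the AM/AL duality recalled above its dual is an AL-space, and under the Dellacherie–Meyer pairing $\langle X,A\rangle=\mean\big[\int_0^T X_{s-}\,dA^{\text{pr}}_s+\int_0^T X_{s}\,dA^{\text{op}}_s\big]$ one identifies $\mathscr{A}^1$ isometrically with $(\mathscr{R}^{\infty})^{*}$, so that the AL-property (and completeness for free) is inherited.

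The main obstacle I anticipate is neither the norm monotonicity nor the AL-identity, both of which are routine, but checking that the coordinatewise Hahn–Jordan lattice operations remain inside $\mathscr{A}^q$: one must confirm that the positive and negative variation parts of a \emph{predictable} finite-variation process are again predictable, and that the Jordan parts of an \emph{optional, purely discontinuous} process stay optional and purely discontinuous, their atoms sitting at the same jump times. These measurability and path-structure preservation facts are exactly what make $(\mathscr{A}^q,\mathscr{A}^q_+)$ a genuine sublattice of the ambient random-measure lattice rather than merely an ordered vector space; once they are secured, both assertions of the proposition follow.
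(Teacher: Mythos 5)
Your main argument is correct, and it takes a genuinely different (and more careful) route than the paper. The paper never constructs lattice operations: it introduces the map $A \mapsto \hat{A} := \text{Var}(A^{\text{pr}})+\text{Var}(A^{\text{op}}) \in L^q$, declares $A \geqslant B \Leftrightarrow \hat{A} \geqslant \hat{B}$, and transfers the lattice, norm-monotonicity and AL properties of $L^q$ (resp.\ $L^1$) back through this map, the AL identity coming from $\widehat{A+B}=\hat{A}+\hat{B}$ on positives together with additivity of the $L^1$-norm. You instead work with the cone order given by $\mathscr{A}^q_+$, build suprema and infima coordinatewise and $\omega$-wise via Hahn--Jordan, check the lattice-norm property through the terminal variations, and verify the L-norm identity $\norm{A+B}_{\mathscr{A}^1}=\norm{A}_{\mathscr{A}^1}+\norm{B}_{\mathscr{A}^1}$ directly. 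What your approach buys is a genuine Riesz-space structure compatible with the cone $\mathscr{A}^q_+$ that the paper itself uses elsewhere: the paper's relation $\hat{A}\geqslant\hat{B}$ is not antisymmetric (e.g.\ $A$ and $-A$ have the same $\hat{A}$), hence is only a preorder, so your construction is the one that actually substantiates the claim. Moreover, the measurability facts you single out as the main obstacle --- that the Jordan parts of a predictable (resp.\ optional, purely discontinuous) finite-variation process are again predictable (resp.\ optional, purely discontinuous) --- are exactly the facts recalled in Appendix \ref{appx}, so they can be quoted rather than re-proved.

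Two caveats. First, your ``equally short dual route'' does not work as stated: by Appendix \ref{appx2} one only has $\mathscr{A}^1 \subset (\mathscr{R}^{\infty})^*$, a proper inclusion in general (the analogue of $L^1 \subsetneq (L^{\infty})^*$), so $\mathscr{A}^1$ cannot be identified isometrically with the full dual of the AM-space $\mathscr{R}^{\infty}$; to salvage this route you would need to show that $\mathscr{A}^1$ is a closed sublattice (or a band) of $(\mathscr{R}^{\infty})^*$, which is no shorter than your direct verification. Second, the additive identity you verify is the correct definition of an $L$-norm; the paper's Section \ref{sec:AIoP} states it with $\max$ in place of the sum, which is evidently a typo, so do not be misled into ``verifying'' that version.
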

\begin{proof}
The partial ordering defined on $\mathscr{A}^{q}$, is the following:
$A \geqslant B \Leftrightarrow \hat{A} \geqslant \hat{B},$ where
    $$\hat{A}:=\text{Var}(A^{\text{pr}})+\text{Var}(A^{\text{op}}) \in L^q,$$
    $$\hat{B}:=\text{Var}(B^{\text{pr}})+\text{Var}(B^{\text{op}}) \in L^q.$$
Thus, if $|A| \geqslant |B|$ this is equivalent to $\hat{|A|} \geqslant \hat{|B|}$. Thus, $\mathscr{A}^{q}$ is a Banach lattice.
This also implies $\|\hat{|A|}\|_{1}=\|\hat{A}\|_{1} \geqslant \|\hat{|B|}\|_{1}=\|\hat{B}\|_{1}$, which means that $\mathcal{A}^{1}$ is an AL-space, since $L^{1}$ is an AL-space, as well.
\end{proof}
\noindent For conjugate exponents $p=\infty$ and $q=1,$ the duality between the spaces $\mathscr{R}^{\infty}$ and $\mathscr{A}^{1}$
plays an important role in our representation of AIs for processes, namely Theorems \ref{theorem:First} and \ref{theorem:Second} in Section \ref{sec:RepreTH}, as well as other results in Sections \ref{sec:RepreTH}, \ref{sec:Arbitr-Expec} and \ref{sec:Examples}. The dual pair
$\langle \mathscr{R}^{\infty},\mathscr{A}^{1} \rangle$ is based on the dual pairing $\langle X,A \rangle$ defined on $ \mathscr{R}^{\infty} \times \mathscr{A}^{1},$ see Appendix \ref{appx}. In our main result (Theorem \ref{theorem:First}) we replace the infimum over classical non-negative expectations with respect to equivalent probability measures (viz. their Radon-Nikodym derivatives), with positive increasing dual processes with unit expected variation. Thus, for every $x \in \rio^+,$ the $x$-increasing family $(\mathscr{D}_x)_{x \in \rio_+}$ of
\cite[Theorem 1]{ChernyMad:2009} is now replaced by an $x$-increasing family $(\mathscr{Q}_{\sigma}^x)_{x \in \rio_+},$ where each $\mathscr{Q}_{\sigma}^x$ is a subset of the class $\mathscr{D}_{\sigma}$ defined in \cite{CheriEtAl:2004}, containing the bi-variate processes $A \in \mathscr{A}^1$ that are in addition nonnegative, increasing and such that $\mean[\text{Var}(A^{\text{pr}})+\text{Var}(A^{\text{op}})]=1,$ see also Appendix \ref{appx}.

%-------------------------------------------paragrafo1----------------------------------------------------------
\section{Basic Representation Result}\label{sec:RepreTH}

We give the analogue of  \cite[Theorem 1]{ChernyMad:2009}, to characterize an AI having a numerical value $x \in \rio_+$ such that the bounded c\`{a}dl\`{a}g path $X$ attains a positive bilinear form $\langle X,A \rangle$ (which is the analogue of the expectation in the one-period case), under each
bi-variate process $A$ from the subset $\mathscr{Q}_{\sigma} \subset \mathscr{D}_{\sigma}$ (which is the analogue of the Radon-Nykodim derivative of the absolute continuous probability measure giving the acceptability in the one-period case) corresponding
to the level $x.$ There is a one-parameter class of such sets.
%%%%%%%%%%%%%%%%%%%%%%%%%%%%  FIRST THEOREM %%%%%%%%%%%%%%%%%%%%%%%%%%%%%%%%
\begin{theorem}\label{theorem:First}
        A map $\alpha : \mathscr{R}^{\infty} \to [0, \infty]$ is an AI for processes if and only if there exists a family
        of $x$-increasing family $(\mathscr{Q}_{\sigma}^x)_{x \in \rio_+}$ such that the representation
            \begin{equation}\label{eq:repre1}
                \alpha(X) = \sup \left\{ x \in \rio_+ \bigg| \,  \inf_{A \in \mathscr{Q}_{\sigma}^x} \langle X,A \rangle \geqslant 0 \right \}
            \end{equation}
        holds, with $\inf \varnothing = \infty$ and $\sup \varnothing = 0.$
\end{theorem}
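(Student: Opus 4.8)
**

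The plan is to mirror the proof strategy of Cherny--Madan's Theorem 1, exploiting the fact that the duality $\langle \mathscr{R}^{\infty},\mathscr{A}^{1}\rangle$ replaces the classical $\langle L^{\infty},L^{1}\rangle$ pairing, so that the positive increasing dual processes in $\mathscr{Q}_{\sigma}^{x}$ play exactly the role that Radon--Nikodym densities $\mathscr{D}_{x}$ play in the one-period case. The two directions of the equivalence are handled separately.

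\textbf{Sufficiency.} First I would assume the representation \eqref{eq:repre1} holds for some $x$-increasing family $(\mathscr{Q}_{\sigma}^{x})_{x\in\rio_{+}}$ and verify properties (1)--(4) of Definition \ref{defi:alpha-phi}. Scale invariance (3) is immediate, since $\langle \lambda X,A\rangle=\lambda\langle X,A\rangle$ and $\lambda>0$ does not alter the sign of the infimum, so the admissible set of levels $x$ is unchanged. For monotonicity (2), if $X\leqslant Y$ then $\langle X,A\rangle\leqslant\langle Y,A\rangle$ for every $A\in\mathscr{Q}_{\sigma}^{x}\subset\mathscr{A}^{1}_{+}$ because each such $A$ is nonnegative and increasing, so any level $x$ admissible for $X$ is admissible for $Y$, giving $\alpha(X)\leqslant\alpha(Y)$. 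Quasi-concavity (1) follows because for fixed $x$ the map $X\mapsto\inf_{A}\langle X,A\rangle$ is concave (an infimum of linear forms), so the super-level condition $\inf_{A}\langle X,A\rangle\geqslant0$ defines a convex set, and the sup over admissible $x$ preserves quasi-concavity. For Fatou continuity (4) I would use that if $\alpha(X^{n})\geqslant x$ for all $n$ and $(X^{n}-X)^{*}\plim0$, then along a subsequence $\langle X^{n},A\rangle\to\langle X,A\rangle$ for each fixed $A$ (using boundedness of the sequence in $\mathscr{R}^{\infty}$ together with $A\in\mathscr{A}^{1}$ and dominated convergence on the pairing), so $\inf_{A}\langle X,A\rangle\geqslant0$ and hence $\alpha(X)\geqslant x$.

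\textbf{Necessity.} Conversely, assume $\alpha$ is an AI for processes. For each $x\geqslant0$ consider the acceptance set $\mathscr{A}_{x}=\{X:\alpha(X)\geqslant x\}$, which by (1) and (3) is a convex cone, by (2) contains $\mathscr{R}^{\infty}_{+}$, and by (4) is closed in the relevant Fatou sense. The key step is to apply a bipolar/separation argument in the dual pair $\langle\mathscr{R}^{\infty},\mathscr{A}^{1}\rangle$: because $\mathscr{R}^{\infty}$ is an AM-space with order unit (Proposition \ref{prop:partial-order-R-inf}) its dual is the AL-space $\mathscr{A}^{1}$ (Proposition \ref{prop:partial-order-A}), so a Fatou-closed convex cone is recovered from its polar, which consists of the $A\in\mathscr{A}^{1}_{+}$ with $\langle X,A\rangle\geqslant0$ for all $X\in\mathscr{A}_{x}$. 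Normalizing these supporting processes to unit expected variation places them in $\mathscr{D}_{\sigma}$, and I would define $\mathscr{Q}_{\sigma}^{x}$ as the resulting normalized polar set. The $x$-increasing property of the family follows from the nesting $\mathscr{A}_{x}\subseteq\mathscr{A}_{x'}$ for $x\geqslant x'$, which reverses under polarity. One then checks that \eqref{eq:repre1} reproduces $\alpha$ by a standard argument: $X\in\mathscr{A}_{x}$ iff $X$ is supported by every element of $\mathscr{Q}_{\sigma}^{x}$, i.e. $\inf_{A\in\mathscr{Q}_{\sigma}^{x}}\langle X,A\rangle\geqslant0$, whence taking the sup over admissible $x$ recovers $\alpha(X)$.

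\textbf{Main obstacle.} The delicate point is the separation/bipolar step and the correct topology for it. In the classical $L^{\infty}$ setting the weak-$*$ (i.e. $\sigma(L^{\infty},L^{1})$) topology makes Fatou-closed convex sets coincide with weak-$*$-closed ones, and the bipolar theorem then applies. Here I must ensure that the Fatou continuity in (4), phrased via convergence in probability of the running suprema $(X^{n})^{*}$, corresponds exactly to closedness of $\mathscr{A}_{x}$ in the $\sigma(\mathscr{R}^{\infty},\mathscr{A}^{1})$ topology so that the Krein--Smulian type argument applies to the cone. Establishing this equivalence rigorously — that the Fatou property for processes is the right substitute for weak-$*$ sequential closedness and interacts correctly with the pairing $\langle X,A\rangle=\mean[\int X\,\ud A^{\mathrm{pr}}+\int X\,\ud A^{\mathrm{op}}]$ — is where the genuine work lies; the rest is a faithful transcription of the Cherny--Madan scheme into the lattice-dual framework furnished by Propositions \ref{prop:partial-order-R-inf} and \ref{prop:partial-order-A}.
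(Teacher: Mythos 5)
Your proposal follows essentially the same route as the paper: sufficiency is checked axiom by axiom exactly as in the paper's proof (monotonicity of the pairing, an infimum of linear forms for quasi-concavity, dominated convergence for the Fatou property), and your necessity argument via polars of the acceptance sets $\mathscr{A}_{x}$ in $\mathscr{A}^{1}$, their nesting in $x$, and recovery of $\alpha$ through the induced coherent risk measures $\rho_{x}$ is precisely the content of the paper's Theorem \ref{theorem:Second} together with Lemma \ref{lemma:Support}. The one step you flag as the ``main obstacle'' --- that Fatou continuity, phrased via bounded sequences with $(X^{n}-X)^{*}\plim 0$, must be shown to yield $\sigma(\mathscr{R}^{\infty},\mathscr{A}^{1})$-closedness so a bipolar/separation argument applies --- is not re-proved in the paper either: it is delegated to \cite[Theorem 3.3, Corollary 3.5]{CheriEtAl:2004}, which establishes exactly that a coherent monetary risk measure for bounded c\`{a}dl\`{a}g processes with this Fatou property admits a dual representation over $\mathscr{D}_{\sigma}\subset\mathscr{A}^{1}$, so the work you isolate as the genuine difficulty is covered by that citation.
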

\begin{proof}
    $(\Leftarrow)\;$ We claim that $\alpha$ defined by (\ref{eq:repre1}) satisfies the four properties of an AI. First, we check
    property (1) of Definition \ref{defi:alpha-phi} that $\alpha$ values only acceptable cash flows $X \in \mathscr{R}^{\infty}$ evolving during the
    horizon $[0,T]$ which belong to a convex level set $\mathscr{A}_x$ for any $x \in \rio_+.$ Indeed, assuming that $X$ and in addition $Y \in \mathscr{R}^{\infty}$
    entail a value of $\alpha$ which is $\geqslant x,$ then choosing $y < x$ for any bi-variate process $A \in \mathscr{Q}_{\sigma}^y$
    we have a value of the linear functional $\langle \cdot\, ,A \rangle,$ corresponding to each cash flow, which must be $\geqslant 0.$ Taking a convex combination
    for $\lambda \in [0,1]$ we then have $\langle \lambda X+ (1-\lambda)Y,A \rangle \geqslant 0,$
    and for every $A$ in the biggest class $\mathscr{Q}_{\sigma}^x$ we thus have $\inf_{A} \langle \lambda X+ (1-\lambda)Y ,A \rangle \geqslant 0$
    too. This corresponds to the greatest $x \in \rio_+$ such that $\alpha(\lambda X+ (1-\lambda)Y) \geqslant x$
    which proves the convexity of the level set $\mathscr{A}_x$ (or equivalently the quasi-concavity of the index $\alpha$). For the monotonicity,
    given two elements of $\mathscr{R}^{\infty}$ such that $X \leqslant Y,$ the stochastic integral together with the expectation operator defining the bilinear form used in (\ref{eq:repre1}) are monotone, then $\langle Y,A \rangle \geqslant \langle X,A \rangle \geqslant 0$ from which property (2) of Definition \ref{defi:alpha-phi} easily follows. Scale invariance is trivial. To show the upper Fatou-continuity, we first assume
    $\sup_{t \in [0,T]} |X_t^n| \plim  \sup_{t \in [0,T]} |X_t|$ for a bounded sequence $(X^n)_{n \in \nio}$ of elements $X^n \in \mathscr{R}^{\infty}$
    and some $X \in \mathscr{R}^{\infty}.$ This implies
        $$\left(\int_{(0,T]} X_{t^-}^n\ud A^{\text{pr}}_t + \int_{[0,T]} X_t^n \ud A^{\text{op}}_t \right)
        \plim \left(\int_{(0,T]} X_{t^-}\ud A^{\text{pr}}_t + \int_{[0,T]} X_t\ud A^{\text{op}}_t \right),$$
    thus by the Lebesgue's Dominated Convergence theorem we have
        $$  \lim_{n \to \infty} \langle X^n ,A \rangle = \langle X ,A \rangle \geqslant \limsup_{n \to \infty}
        \inf_{A \in \mathscr{Q}_{\sigma}^y} \langle X,A \rangle \geqslant 0$$
    for some $A \in \mathscr{Q}_{\sigma}^y.$ Now, for any such $y < x,$ any $A \in \mathscr{Q}_{\sigma}^y$ and any $n \in \nio$ we have $\alpha(X^n) \geqslant x$
    by construction so that the previous implies $\alpha(X) \geqslant x$ too, which is the equivalent formulation of the Fatou property
    for the AI.
\end{proof}
\noindent  Before completing the proof, we observe that representation (\ref{eq:repre1}) of Theorem \ref{theorem:First} is equivalently given by
    \begin{equation}\label{eq:AIoP-Risk}
        \alpha(X) = \sup \left\{ x \in \rio_+ \big| \, \rho_x(X)  \leqslant 0 \right \},
    \end{equation}
as pointed out in \cite{ChernyMad:2009}. Indeed, each functional on $\mathscr{R}^{\infty}$ defined by
    $$\rho_x(X):= - \inf_{A \in \mathscr{Q}_{\sigma}^x} \langle X,A \rangle, \quad x \in \rio_+$$
is by  \cite[Corollary 3.5]{CheriEtAl:2004} a coherent risk measure for processes. For $x \leqslant y,$ passing from $\mathscr{Q}_{\sigma}^x$ to the bigger set
$\mathscr{Q}_{\sigma}^y$ the value of $\rho_x(X)$ will increase to $\rho_y(X),$ for any $X \in \mathscr{R}^{\infty}.$ Then, the supremum in (\ref{eq:repre1}) of Theorem \ref{theorem:First} will increase too and obviously the equivalent representation given in (\ref{eq:AIoP-Risk}) holds true. Conversely, for a family $(\rho_x(X))_{x \in \rio_+}$
of coherent risk measures for processes $X \in \mathscr{R}^{\infty},$ which is increasing in $x$ as a map $x \mapsto \rho_x(X)$ for a fixed $X,$ any set $\mathscr{Q}_{\sigma}^x$
corresponding to a risk measure $\rho_x(X)$ in the family must be bigger anytime $x$ increases, due to the representation (\ref{eq:AIoP-Risk}).
\begin{remark}\label{remark:AccSet}
    The acceptability set $\mathscr{A}_x$ introduced in Definition \ref{defi:alpha-phi} of Section \ref{sec:AIoP} is equivalently given by
    $$\mathscr{A}_x := \left\{ X \in \mathscr{R}^{\infty} \big| \, \rho_x(X) \leqslant 0 \right\}.$$
    Thus, for every $x \in \rio_+$ we have a whole family which is clearly decreasing in $x.$ As a consequence the numerical value of an AI for processes can be recast as
        $$\alpha(X) = \sup \left\{ x \in \rio_+ \big| \, X \in \mathscr{A}_x  \right\}.$$
    There are several levels $x$ at which the performance of a trade can be measured by valuing its riskiness in an acceptable way.
\end{remark}
\noindent In order to prove the `if part' we need the following characterizations. Based on the AM-AL duality between $\mathscr{R}^{\infty}$ and $\mathscr{A}^1,$ we in addition see that for any $x \in \rio_+$ the
coherent risk measure for processes
    $$\rho_x(X)=\inf\{m \in \rio \,| \, m \cdot \textbf{1} + X \in \mathscr{A}_x\},\;\; \text{for every} \, X \in \mathscr{R}^{\infty},$$
has the following \textit{dual representation}:
    $$-\rho_{x}(X)=\inf_{\pi \in \mathscr{A}^{0}_{x}}\pi(X),$$
where $\mathscr{A}^{0}_{x}=\{f \in \mathscr{A}^{1} : f(X) \geqslant 0,$ each $X \in \mathscr{A}_{x}\},$ is the \textit{polar} set of $\mathscr{A}_{x}$ in $\mathscr{A}^{1}$. Finally, recall that a subset of a vector space is called a wedge if it is convex and it has the property that for any $x$
lying in the set we also have that $\lambda \cdot x$ belong to the same set, for every $\lambda \in \rio_+$. Putting all things together, we have that proving $(\Rightarrow)$ of Theorem \ref{theorem:First} amounts to prove
the following:
\begin{theorem}\label{theorem:Second}
For any AI $\alpha : \mathscr{R}^{\infty} \rightarrow [0,\infty],$ with the property that every level set $\mathscr{A}_{x}$ of $\alpha$ is a wedge, there exists an increasing family $(\mathscr{Q}_{\sigma}^x)_{x \in \rio_+}$ of functional sets lying in $\mathscr{A}^1$  such that $x \leqslant y$ implies $\mathscr{Q}_{\sigma}^x \subset \mathscr{Q}_{\sigma}^y$ and
        $$\alpha(X)=\sup \left\{x \in \rio_+ \bigg| \inf_{\pi \in \mathscr{Q}_{\sigma}^x}\pi(X) \geqslant 0 \right\}$$
holds.
\end{theorem}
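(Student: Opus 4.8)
The plan is to deduce Theorem \ref{theorem:Second} from the bipolar theorem applied to each acceptance set, exploiting the AM--AL duality between $\mathscr{R}^{\infty}$ and $\mathscr{A}^{1}$ together with the dual representation of $\rho_x$ recalled above. First I would record two structural facts about a fixed level set $\mathscr{A}_x$, $x \in \rio_+$. By hypothesis it is a wedge, i.e. a convex cone, and by property (2) of Definition \ref{defi:alpha-phi} it is upward monotone: $X \in \mathscr{A}_x$ and $Y \geqslant X$ force $Y \in \mathscr{A}_x$. The decisive point is that $\mathscr{A}_x$ is closed in the weak-star topology $\sigma(\mathscr{R}^{\infty},\mathscr{A}^{1})$; this is exactly what the upper Fatou property (4) provides once one passes through norm-bounded sets. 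Being weak-star closed and (for $x$ in the effective range) nonempty, $\mathscr{A}_x$ then contains $0$, so monotonicity yields $\mathscr{R}^{\infty}_+ \subseteq \mathscr{A}_x$.

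Granting weak-star closedness, I would set $\mathscr{Q}_{\sigma}^x := \mathscr{A}^0_x \cap \mathscr{D}_{\sigma}$, the intersection of the polar cone $\mathscr{A}^0_x=\{\pi \in \mathscr{A}^1 : \pi(X) \geqslant 0 \text{ for each } X \in \mathscr{A}_x\}$ with the normalized class $\mathscr{D}_{\sigma}$ of nonnegative increasing bi-variate processes satisfying $\mean[\text{Var}(A^{\text{pr}})+\text{Var}(A^{\text{op}})]=1$. Because $\mathscr{R}^{\infty}_+ \subseteq \mathscr{A}_x$, the polar lies in the positive cone $\mathscr{A}^1_+$, so the normalization singles out a base $\mathscr{Q}_{\sigma}^x$ that generates the whole cone $\mathscr{A}^0_x$. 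For the monotonicity of the family, note that $x \leqslant y$ forces $\mathscr{A}_y \subseteq \mathscr{A}_x$, and since taking polars reverses inclusions one gets $\mathscr{A}^0_x \subseteq \mathscr{A}^0_y$, hence $\mathscr{Q}_{\sigma}^x \subseteq \mathscr{Q}_{\sigma}^y$, which is the required $x$-increasing family of functional sets lying in $\mathscr{A}^1$.

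Next I would recover the representation. The bipolar theorem for the weak-star closed convex cone $\mathscr{A}_x$ gives $\mathscr{A}_x=(\mathscr{A}^0_x)^0=\{X \in \mathscr{R}^{\infty} : \inf_{\pi \in \mathscr{A}^0_x}\pi(X) \geqslant 0\}$, which is also encoded in the dual representation $-\rho_x(X)=\inf_{\pi \in \mathscr{A}^0_x}\pi(X)$ stated above together with $\mathscr{A}_x=\{X : \rho_x(X) \leqslant 0\}$. Since $\pi \mapsto \pi(X)$ is positively homogeneous, the sign of $\inf_{\pi}\pi(X)$ over the cone $\mathscr{A}^0_x$ agrees with its sign over the base $\mathscr{Q}_{\sigma}^x$; hence $X \in \mathscr{A}_x$ if and only if $\inf_{\pi \in \mathscr{Q}_{\sigma}^x}\pi(X) \geqslant 0$. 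Combining this equivalence with the reformulation $\alpha(X)=\sup\{x \in \rio_+ : X \in \mathscr{A}_x\}$ of Remark \ref{remark:AccSet} yields
$$\alpha(X)=\sup\left\{x \in \rio_+ \bigg| \inf_{\pi \in \mathscr{Q}_{\sigma}^x}\pi(X) \geqslant 0 \right\},$$
with the conventions $\inf \varnothing = \infty$ and $\sup \varnothing = 0$ taking care of the degenerate cases $\alpha(X) \in \{0,\infty\}.$

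The main obstacle is the weak-star closedness of $\mathscr{A}_x$, everything downstream being duality bookkeeping. The subtlety is that property (4) is phrased through convergence in probability of the running suprema $(X^n-X)^{*}$ rather than directly as weak-star lower semicontinuity. I would bridge the two by showing that on each norm ball of $\mathscr{R}^{\infty}$ the condition $(X^n-X)^{*}\plim 0$ forces $\langle X^n,A\rangle \to \langle X,A\rangle$ for every $A \in \mathscr{A}^1$ --- the dominated-convergence argument already carried out in the `if' part of Theorem \ref{theorem:First} --- so that $\mathscr{A}_x$ intersected with any ball is weak-star closed. The Krein--Smulian theorem then promotes this to weak-star closedness of the convex set $\mathscr{A}_x$ itself, at which point the bipolar theorem applies and the argument closes.
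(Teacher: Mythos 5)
Your route is, at its core, the same as the paper's: you take $\mathscr{Q}_{\sigma}^x$ to be (a normalization of) the polar $\mathscr{A}^0_x$ of the level set $\mathscr{A}_x$, observe that polarity reverses the inclusion $\mathscr{A}_y \subseteq \mathscr{A}_x$ for $x \leqslant y$, and unwind $\alpha(X)=\sup\{x \in \rio_+ \,|\, X \in \mathscr{A}_x\}$ through the dual representation $-\rho_x(X)=\inf_{\pi \in \mathscr{A}^0_x}\pi(X)$. In two respects you are more careful than the paper: you justify replacing the polar cone by its base $\mathscr{A}^0_x \cap \mathscr{D}_{\sigma}$ via positive homogeneity (the paper simply takes $\mathscr{Q}_{\sigma}^x=\mathscr{A}^0_x$), and you make explicit that everything rests on the $\sigma(\mathscr{R}^{\infty},\mathscr{A}^1)$-closedness of $\mathscr{A}_x$, which the paper never proves but silently outsources to the AM--AL duality and to \cite{CheriEtAl:2004}.

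However, your argument for that closedness --- the step you yourself call the main obstacle --- does not work, and the error is one of direction. The dominated-convergence observation that $(X^n-X)^* \plim 0$ on a norm ball forces $\langle X^n,A\rangle \to \langle X,A\rangle$ for every $A \in \mathscr{A}^1$ says precisely that convergence in probability is \emph{stronger} than $\sigma(\mathscr{R}^{\infty},\mathscr{A}^1)$-convergence on balls. The Fatou property therefore makes $\mathscr{A}_x$ intersected with a ball closed under the stronger convergence, which gives no control over weak-star limits: a weak-star convergent sequence in the ball need not admit any subsequence converging in probability. Already for $T=0$, a Rademacher sequence $(r_n)$ in the unit ball of $L^{\infty}$ satisfies $r_n \to 0$ in $\sigma(L^{\infty},L^1)$ while $|r_n|=1$ a.s., so no subsequence tends to $0$ in probability, and Fatou-type closedness says nothing about such sequences. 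The missing idea is the point where convexity of $\mathscr{A}_x$ must actually be used: one shows that every element of the weak-star closure of the convex, norm-bounded set $\mathscr{A}_x \cap \{\|X\|_{\mathscr{R}^{\infty}}\leqslant k\}$ is the limit in probability of a dominated sequence of \emph{convex combinations} of its elements (Mazur's lemma applied after passing to the coarser topology), and only then invokes the Fatou property. This is Delbaen's classical argument in the static case and its extension to $\mathscr{R}^{\infty}$ in \cite{CheriEtAl:2004} (Theorem 3.3 and Corollary 3.5), which is exactly what the paper's appeal to ``AM--AL duality'' stands in for. Note also that invoking Krein--Smulian for $\sigma(\mathscr{R}^{\infty},\mathscr{A}^1)$ presupposes that $\mathscr{R}^{\infty}$ is a dual space with predual $\mathscr{A}^1$, another fact that must be quoted rather than taken for granted. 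If you replace your bridging step by the citation (or by the Mazur argument), the rest of your proof is sound and essentially coincides with the paper's.
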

\begin{proof}
    The level sets of $\alpha$ are $\mathscr{A}_x=\{X \in \mathscr{R}^{\infty} \, | \, \alpha(X) \geqslant x\}$. For these sets, $\mathscr{A}_y \subset \mathscr{A}_x$ holds, if $y \leqslant x$. For the equivalent polar sets $\mathscr{Q}_{\sigma}^x=\mathscr{A}^0_x$ lying in $\mathscr{A}^1,$ if $x \leqslant z$ this implies $\mathscr{Q}_{\sigma}^x \subset \mathscr{Q}_{\sigma}^z.$ Thus, any $X \in \mathscr{R}^{\infty}$ lies in some $\mathscr{A}_{x_0}.$ This implies that $\rho_{x_0}(X) \leqslant 0$, hence $-\rho_{x_0}(X) \geqslant 0$. From the equivalent dual representation of the coherent risk measure $\rho_{x_{0}}$, then
        $$\alpha(X)=\sup\{x_0 \in \rio_+  | \, -\rho_{x_0}(X) \leqslant 0\}$$
    and we are done.
\end{proof}
\noindent The family $(\mathscr{Q}_{\sigma}^x)_{x \in \rio_+}$ (viz. system of supporting kernels in \cite{ChernyMad:2009}) can be characterized as
    \begin{equation}\label{eq:repre2}
        \mathscr{Q}_{\sigma}^x = \left\{ A \in \mathscr{Q}_{\sigma} | \langle X,A \rangle \geqslant 0,\; \forall \, X \in \mathscr{R}^{\infty},\,\alpha(X) > x \geqslant 0  \right\}.
    \end{equation}
Then, we have the following maximality property:
\begin{lemma}\label{lemma:Support}
    For any AI $\alpha,$ there exists a family $(\mathscr{Q}_{\sigma}^x)_{x \in \rio_+}$ supporting the representation (\ref{eq:repre1}) and defined by
    (\ref{eq:repre2}), such that if $(\mathscr{E}_{\sigma}^x)_{x \in \rio_+}$ is a different $x$-increasing family satisfying (\ref{eq:repre1}), then it holds
    $\mathscr{E}_{\sigma}^x \subset \mathscr{Q}_{\sigma}^x$ for any $x \in \rio_+.$
\end{lemma}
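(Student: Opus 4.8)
The plan is to pin down two facts about the candidate family defined by (\ref{eq:repre2}): that it is a valid support for the representation (\ref{eq:repre1}), and that it dominates every other support. I would first record that $(\mathscr{Q}_{\sigma}^x)_{x \in \rio_+}$ is $x$-increasing, which is immediate from (\ref{eq:repre2}): if $x \leqslant y$ then $\{X : \alpha(X) > y\} \subseteq \{X : \alpha(X) > x\}$, so the nonnegativity constraint defining $\mathscr{Q}_{\sigma}^y$ is weaker and $\mathscr{Q}_{\sigma}^x \subset \mathscr{Q}_{\sigma}^y$ follows. The existence of \emph{some} supporting family is already guaranteed by Theorem \ref{theorem:First}; the content of the lemma is that (\ref{eq:repre2}) produces the largest such family.

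For the maximality, let $(\mathscr{E}_{\sigma}^x)_{x \in \rio_+}$ be any $x$-increasing family satisfying (\ref{eq:repre1}) and fix $A \in \mathscr{E}_{\sigma}^x$. I would show $A \in \mathscr{Q}_{\sigma}^x$ by checking the defining condition of (\ref{eq:repre2}): take any $X \in \mathscr{R}^{\infty}$ with $\alpha(X) > x$. Since $\alpha(X)$ is, by (\ref{eq:repre1}) applied to the $\mathscr{E}$-family, the supremum of $\{y \in \rio_+ : \inf_{B \in \mathscr{E}_{\sigma}^y} \langle X,B \rangle \geqslant 0\}$, and this supremum exceeds $x$, there is some $y > x$ in that set, so $\langle X,B \rangle \geqslant 0$ for every $B \in \mathscr{E}_{\sigma}^y$. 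Because the $\mathscr{E}$-family is $x$-increasing and $x < y$, we have $A \in \mathscr{E}_{\sigma}^x \subset \mathscr{E}_{\sigma}^y$, whence $\langle X,A \rangle \geqslant 0$. As $X$ was an arbitrary path with $\alpha(X) > x$, this is exactly the requirement in (\ref{eq:repre2}), so $A \in \mathscr{Q}_{\sigma}^x$ and therefore $\mathscr{E}_{\sigma}^x \subset \mathscr{Q}_{\sigma}^x$ for every $x$.

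It then remains to verify that $(\mathscr{Q}_{\sigma}^x)$ still reproduces $\alpha$ through (\ref{eq:repre1}); write $\tilde{\alpha}(X)$ for the right-hand side of (\ref{eq:repre1}) built from $(\mathscr{Q}_{\sigma}^x)$. For $\tilde{\alpha} \geqslant \alpha$, fix any $x < \alpha(X)$: every $A \in \mathscr{Q}_{\sigma}^x$ satisfies $\langle X,A \rangle \geqslant 0$ directly from (\ref{eq:repre2}), since $\alpha(X) > x$, so $\inf_{A \in \mathscr{Q}_{\sigma}^x} \langle X,A \rangle \geqslant 0$ and hence $\tilde{\alpha}(X) \geqslant x$; letting $x \uparrow \alpha(X)$ gives $\tilde{\alpha}(X) \geqslant \alpha(X)$. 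For the reverse $\tilde{\alpha} \leqslant \alpha$, take the family $\mathscr{E}$ furnished by Theorem \ref{theorem:First}; the inclusion $\mathscr{E}_{\sigma}^x \subset \mathscr{Q}_{\sigma}^x$ just proved enlarges the index set of the infimum, so $\inf_{A \in \mathscr{Q}_{\sigma}^x} \langle X,A \rangle \leqslant \inf_{B \in \mathscr{E}_{\sigma}^x} \langle X,B \rangle$, whence $\{x : \inf_{A \in \mathscr{Q}_{\sigma}^x} \langle X,A \rangle \geqslant 0\} \subseteq \{x : \inf_{B \in \mathscr{E}_{\sigma}^x} \langle X,B \rangle \geqslant 0\}$ and $\tilde{\alpha} \leqslant \alpha$. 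Combining the two inequalities yields $\tilde{\alpha} = \alpha$, so $(\mathscr{Q}_{\sigma}^x)$ supports (\ref{eq:repre1}) and dominates every other support.

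The step I expect to be most delicate is the bookkeeping with the strict inequality $\alpha(X) > x$ and the suprema: one must ensure that $\alpha(X) > x$ genuinely delivers an admissible level $y > x$ for the $\mathscr{E}$-family, so that the monotone inclusion $\mathscr{E}_{\sigma}^x \subset \mathscr{E}_{\sigma}^y$ can be invoked, and one must treat the boundary conventions $\inf \varnothing = \infty$ and $\sup \varnothing = 0$ consistently so that paths with $\alpha(X) \in \{0,\infty\}$ do not break the monotone passage to the limit.
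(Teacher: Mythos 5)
Your proposal is correct and follows essentially the same route as the paper: both arguments hinge on the observation that the defining condition (\ref{eq:repre2}) forces $\langle X,A\rangle \geqslant 0$ whenever $\alpha(X) > x$ and $A \in \mathscr{Q}_{\sigma}^x$, combined with the monotonicity of the $\mathscr{E}$-family, to squeeze the two inequalities $\tilde{\alpha} \geqslant \alpha$ and $\tilde{\alpha} \leqslant \alpha$ and to get maximality. The only difference is organizational: the paper argues by contradiction (and somewhat loosely, negating $\mathscr{E}_{\sigma}^x \subset \mathscr{Q}_{\sigma}^x$ as $\mathscr{E}_{\sigma}^x \supset \mathscr{Q}_{\sigma}^x$), whereas you verify membership $\mathscr{E}_{\sigma}^x \subset \mathscr{Q}_{\sigma}^x$ directly and then deduce the representation, which is the cleaner rendering of the same idea.
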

\begin{proof}
    As in the proof of \cite[Proposition 1]{ChernyMad:2009} we use a squeezing argument and show that equation (\ref{eq:repre1}) can be split in two opposite inequalities.
    Let us suppose that $\alpha(X),$ defined by (\ref{eq:repre1}) and supported by some $x$-increasing family $(\mathscr{E}_{\sigma}^x)_{x \in \rio_+},$
    is strictly greater than $\sup \{ x \in \rio_+ | \,  \inf_{A \in \mathscr{Q}_{\sigma}^x} \langle X,A \rangle \geqslant 0\},$ for any process $X \in \mathscr{R}^{\infty}.$ Then
        $$\alpha(X) > y > \sup \left\{ x \in \rio_+ \bigg| \,  \inf_{A \in \mathscr{Q}_{\sigma}^x} \langle X,A \rangle \geqslant 0  \right\},$$
    for some $y \in \rio_+.$ But this implies the existence of $A \in \mathscr{Q}_{\sigma}^y$ which makes negative the bilinear form inside the supremum,
    contradicting the definition of $\mathscr{Q}_{\sigma}^y$ in (\ref{eq:repre2}). To show the reverse inequality, let us suppose that $\mathscr{E}_{\sigma}^x \supset \mathscr{Q}_{\sigma}^x.$ Then, we can find $A \in \mathscr{E}_{\sigma}^x$ which again makes negative the bilinear form and at the same time makes $\alpha(X) > x \in \rio_+,$  contradicting the definition (\ref{eq:repre1}) of $\alpha.$ Then,
        $$\alpha(X) \geqslant \sup \left\{ x \in \rio_+ \bigg| \,  \inf_{A \in \mathscr{Q}_{\sigma}^x} \langle X,A \rangle \geqslant 0 \right \}$$
    and $(\mathscr{Q}_{\sigma}^x)_{x \in \rio_+}$ is a maximal family.
\end{proof}
\noindent We state a Lemma which will be useful in the identification of typical AIs for processes, provided that a family
$(\mathscr{Q}_{\sigma}^x)_{x \in \rio_+}$ is meant to supports $\alpha$ as given by Lemma \ref{lemma:Support}.
          Recall that the space $\mathscr{Q}_{\sigma} \subset \mathscr{A}^1$ inherits the norm $\left\| \,\cdot\,\right\|_{\mathscr{A}^1}$ and then it is a Banach space.
    \begin{lemma}\label{lemma:Closedness}
    Define an AI $\alpha$ by (\ref{eq:repre1}). Let $(\mathscr{Q}_{\sigma}^x)_{x \in \rio_+}$
    be a family of convex $\left\| \,\cdot\,\right\|_{\mathscr{A}^1}$-closed subsets of $\mathscr{Q}_{\sigma}$ that are minimal with respect to intersection, i.e.
        $$\mathscr{Q}_{\sigma}^x := \cap_{y > x} \mathscr{Q}_{\sigma}^y \qquad \text{for any}\;\; x \in \rio_+.$$
    Then $(\mathscr{Q}_{\sigma}^x)_{x \in \rio_+}$ supports $\alpha$ in the representation (\ref{eq:repre1}).
\end{lemma}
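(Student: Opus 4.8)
The plan is to prove the equality by a squeezing argument, reducing everything to a bipolar statement in the dual pair $\langle\mathscr{R}^{\infty},\mathscr{A}^1\rangle$. Write
$$\beta(X):=\sup\left\{x\in\rio_+ \,\bigg|\, \inf_{A\in\mathscr{Q}_{\sigma}^x}\langle X,A\rangle\geqslant 0\right\}$$
for the functional induced by the given family, with the conventions $\inf\varnothing=\infty$ and $\sup\varnothing=0$. Since the family is $x$-increasing, the map $x\mapsto\inf_{A\in\mathscr{Q}_{\sigma}^x}\langle X,A\rangle$ is nonincreasing, so that $\{x:\inf_{A\in\mathscr{Q}_{\sigma}^x}\langle X,A\rangle\geqslant 0\}$ is an interval anchored at $0$ and $\beta$ is well defined. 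The goal is to show $\beta(X)=\alpha(X)$ for every $X\in\mathscr{R}^{\infty}$, which is exactly the assertion that $(\mathscr{Q}_{\sigma}^x)_{x\in\rio_+}$ supports $\alpha$ through (\ref{eq:repre1}).

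The first ingredient I would record is that, by the Fatou property (4) of Definition \ref{defi:alpha-phi} together with the convexity and wedge structure of the level sets, each $\mathscr{A}_x$ is a $\|\cdot\|_{\mathscr{R}^{\infty}}$-closed convex wedge; by Mazur's theorem it is then closed in the weak topology $\sigma(\mathscr{R}^{\infty},\mathscr{A}^1)$, which under the AM--AL duality of Propositions \ref{prop:partial-order-R-inf} and \ref{prop:partial-order-A} is precisely the weak topology induced by the full topological dual. Hence the bipolar theorem applies and gives $\mathscr{A}_x=(\mathscr{A}^0_x)_0$, where $(\cdot)_0$ denotes the pre-polar in $\mathscr{R}^{\infty}$. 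I would then observe that convexity and $\|\cdot\|_{\mathscr{A}^1}$-closedness single the members of the family out as the polars furnished by the dual representation of Theorem \ref{theorem:Second}, and translate the intersection-minimality hypothesis into the identity $\mathscr{Q}_{\sigma}^x=\bigl(\bigcup_{y>x}\mathscr{A}_y\bigr)^0=\bigcap_{y>x}\mathscr{A}^0_y$: indeed $\{X:\alpha(X)>x\}=\bigcup_{y>x}\mathscr{A}_y$, and taking polars turns the union into an intersection, which is exactly $\mathscr{Q}_{\sigma}^x=\bigcap_{y>x}\mathscr{Q}_{\sigma}^y$.

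With these identifications in hand the squeeze is immediate. Since each $\mathscr{A}_y$ with $y>x$ is contained in the bipolar $(\mathscr{Q}_{\sigma}^x)_0$, while the monotonicity $\mathscr{A}_y\subseteq\mathscr{A}_x$ and the weak closedness of $\mathscr{A}_x$ force the reverse containment, one obtains
$$\bigcup_{y>x}\mathscr{A}_y\;\subseteq\;(\mathscr{Q}_{\sigma}^x)_0\;\subseteq\;\mathscr{A}_x.$$
Because $\inf_{A\in\mathscr{Q}_{\sigma}^x}\langle X,A\rangle\geqslant 0$ is equivalent to $X\in(\mathscr{Q}_{\sigma}^x)_0$, the left inclusion yields $\beta(X)\geqslant\sup\{x:\alpha(X)>x\}=\alpha(X)$, while the right inclusion yields $X\in(\mathscr{Q}_{\sigma}^x)_0\Rightarrow X\in\mathscr{A}_x\Rightarrow\alpha(X)\geqslant x$, hence $\beta(X)\leqslant\alpha(X)$. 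Combining the two proves $\beta=\alpha$, i.e. the representation (\ref{eq:repre1}) holds with the given family.

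The step I expect to be the main obstacle is the rigorous passage through the bipolar theorem: one must be sure that the norm-closedness coming from the Fatou continuity (4) really does upgrade to $\sigma(\mathscr{R}^{\infty},\mathscr{A}^1)$-closedness, and that no gap opens between the two closures because $\mathscr{A}^1$ is the genuine norm dual of the AM-space $\mathscr{R}^{\infty}$ rather than a proper norming subspace. Equally delicate is the exact matching of strict and non-strict level sets at the boundary value $x=\alpha(X)$: it is precisely the intersection-minimality $\mathscr{Q}_{\sigma}^x=\bigcap_{y>x}\mathscr{Q}_{\sigma}^y$ (right-continuity of the family) that forces $(\mathscr{Q}_{\sigma}^x)_0$ to coincide with the weak closure of $\bigcup_{y>x}\mathscr{A}_y$ and thereby prevents the supremum defining $\beta$ from landing above or below $\alpha(X)$. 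Without this hypothesis the induced functional could differ from $\alpha$ by a one-sided continuous modification, so checking that the conventions $\inf\varnothing=\infty$, $\sup\varnothing=0$ and the endpoint behaviour are consistent with $\alpha$ is where the argument must be handled with care.
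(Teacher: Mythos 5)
Your overall plan (squeezing $(\mathscr{Q}_{\sigma}^x)_0$ between $\bigcup_{y>x}\mathscr{A}_y$ and $\mathscr{A}_x$) is a reasonable reformulation, but two steps it leans on are not established, and one of them rests on a false statement. The duality step first: you upgrade norm-closedness of the level sets to $\sigma(\mathscr{R}^{\infty},\mathscr{A}^1)$-closedness via Mazur's theorem, on the grounds that ``$\mathscr{A}^1$ is the genuine norm dual of the AM-space $\mathscr{R}^{\infty}$.'' It is not: the paper records in Appendix \ref{appx2} that $\mathscr{A}^1 \subset (\mathscr{R}^{\infty})^*$ is a proper inclusion, exactly as $L^1$ is a proper subspace of $(L^{\infty})^*$. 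Mazur's theorem yields closedness in $\sigma(\mathscr{R}^{\infty},(\mathscr{R}^{\infty})^*)$, which is a \emph{finer} topology than $\sigma(\mathscr{R}^{\infty},\mathscr{A}^1)$; closedness in the coarser topology is strictly stronger and does not follow. This distinction is precisely why the Fatou property (4) of Definition \ref{defi:alpha-phi} is an axiom at all, and why the paper relies on the Cheridito--Delbaen--Kupper representation machinery. The gap is repairable here: since $\alpha$ is defined by (\ref{eq:repre1}), one has $\mathscr{A}_x=\bigcap_{y<x}\bigcap_{A}\{X \,:\, \langle X,A\rangle \geqslant 0\}$, an intersection of half-spaces cut out by elements of $\mathscr{A}^1$, hence automatically $\sigma(\mathscr{R}^{\infty},\mathscr{A}^1)$-closed --- but as written your bipolar identity $\mathscr{A}_x=(\mathscr{A}^0_x)_0$ is unsupported.

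The second gap is more serious because it is circular. The identification $\mathscr{Q}_{\sigma}^x=\bigcap_{y>x}\mathscr{A}^0_y$ is asserted (``convexity and $\|\cdot\|_{\mathscr{A}^1}$-closedness single the members of the family out as the polars'') rather than proved, yet it is essentially the statement of the lemma: a family supporting (\ref{eq:repre1}) need not consist of polar sets, and Lemma \ref{lemma:Support} only gives the one-sided inclusion of $\mathscr{Q}_{\sigma}^x$ into the maximal family (\ref{eq:repre2}). The reverse inclusion --- which is exactly what your right-hand inclusion $(\mathscr{Q}_{\sigma}^x)_0 \subseteq \mathscr{A}_x$ encodes --- is where the paper's own proof invokes the Hahn--Banach separation theorem: if some $B$ belonging to the maximal family at level $x$ lay outside the norm-closed convex set $\mathscr{Q}_{\sigma}^y$ for some $y>x$, separation produces $X \in \mathscr{R}^{\infty}$ with $\langle X,B\rangle < 0 \leqslant \inf_{A \in \mathscr{Q}_{\sigma}^y}\langle X,A\rangle$, whence $\alpha(X)\geqslant y > x$, contradicting (\ref{eq:repre2}); the minimality hypothesis $\mathscr{Q}_{\sigma}^x=\bigcap_{y>x}\mathscr{Q}_{\sigma}^y$ then transfers the conclusion down to level $x$. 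In your write-up the minimality hypothesis is only used to rewrite an identity you have already assumed, so the separation argument --- the actual engine of the proof --- never appears. Your closing paragraph correctly names both obstacles, but naming them does not close them: as it stands, you assume the family consists of the polars of the level sets in order to prove that it represents $\alpha$.
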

\begin{proof}
    Let $(\mathscr{E}_{\sigma}^x)_{x \in \rio_+}$ be the $x$-increasing family supporting $\alpha.$ For some $x \in \rio_+,$ take
    a nonempty $\left\| \,\cdot\,\right\|_{\mathscr{A}^1}$-closed and convex set $\mathscr{Q}_{\sigma}^x \subset \mathscr{E}_{\sigma}^x.$
    This enable us to find some $y > x$ and $B \in \mathscr{E}_{\sigma}^x$ such that $B \notin  \mathscr{Q}_{\sigma}^y.$ Thus, by the Hahn-Banach
    Separation Theorem we further find $X \in \mathscr{R}^{\infty}$ such that
        $$\langle X,B \rangle < 0 < \inf_{A \in \mathscr{Q}_{\sigma}^x} \langle X,A \rangle,$$
    but this implies $\alpha(X) \geqslant y > x \geqslant 0
    $ which contradicts the representation (\ref{eq:repre2}).
    As a consequence $B \in  \mathscr{Q}_{\sigma}^y$ and by the maximality stated in Lemma \ref{lemma:Support}
    we are done.
\end{proof}
\noindent It is worth noting that by choosing $T=0,$ $\mathscr{R}^{\infty}=L^{\infty}(\Omega,\siga_0,\proba)$ and one gets the static AI as in \cite{ChernyMad:2009}.

%-------------------------------------------paragrafo1----------------------------------------------------------
\section{Consistency with Second Order Stochastic Dominance}\label{sec:Further-Pr1}
The consistence of a performance measure for processes $\alpha$ with the Second order Stochastic Dominance (SSD), hardly depends upon the definition of SSD itself on the spaces $\mathscr{R}^{\infty}.$
To get the equivalent notion in this space of stochastic processes we require (as in the static setting) that if a trade with random cash flow $X \in \mathscr{R}^{\infty}$ has given a greater `utility' than another $Y \in \mathscr{R}^{\infty},$ then it should have a higher performance too, $\alpha(X) \geqslant \alpha(Y).$ Whence, we need to adapt the notion of expected utility in order
to characterize this preference relation via SSD.

\noindent Given a couple of stochastic cash flows $X,Y \in \mathscr{R}^{\infty},$
we recall that $X^{*}=\sup_{t \in [0,T]}|X_{t}|=X^{*}_{+} + X^{*}_{-}$ is the corresponding random variable in $L^{\infty}_+$ and similarly for $Y.$ Therefore, the binary relation defined on $\mathscr{R}^{\infty} \times \mathscr{R}^{\infty}$ by
    $$X \ssd Y \Longleftrightarrow X^{*} \ssd Y^{*},$$
is the analogue of the SSD in the one-time step setting, where what matter are the terminal cash flows. Here instead, we consider the path-dependency
using the running maximum of the reflected (at the origin) processes $ X^{*}$ and $Y^{*}.$ As a consequence, we rewrite the above SSD relation
as
    $$X \ssd Y \Longleftrightarrow \int_{0}^{z} F_{X^{*}}(s) \ud s \leqslant \int_{0}^{z} F_{Y^{*}}(s) \ud s, \qquad \text{for every}\;\, z \in (0, \infty).$$
For the quote on terminal wealth, see \cite[p. 671]{ShaYitz:1994}. In the same paper, we find the characterization of SSD in terms of expected utility. Hence, we define the expected utility on $\mathscr{R}^{\infty}$ in the following sense:
\begin{defi}\label{defi:ExpUt-revi}
For a random cash flow evolving in time $X \in \mathscr{R}^{\infty},$ the version of expected utility for processes over the horizon $[0,T]$ is given by
    $$\mean(U(X)):=\mean(U(X^{*})),$$
where $X^{*} \in L^{\infty}_+,$ and $U: \rio \rightarrow \rio \cup \{\infty\}$ is some concave, non-decreasing function.
\end{defi}
\noindent

%-------------------------------------------paragrafo1----------------------------------------------------------
\section{Arbitrage and Expectation Consistency}\label{sec:Arbitr-Expec}

In Section \ref{sec:RepreTH} we introduced the family of coherent risk measures for processes
    $$\rho_x(X)=\inf\{m \in \rio \,| \, m \cdot \textbf{1} + X \in \mathscr{A}_x\},\;\; \text{for any} \, X \in \mathscr{R}^{\infty},$$
with the  dual representation
    $$\rho_{x}(X)=\sup_{\pi \in \mathscr{A}^{0}_{x}}\pi(-X),$$
where $\mathscr{A}^{0}_{x}=\{f \in \mathscr{A}^{1} | f(X) \geqslant 0,\;\, \text{for any}\, X \in \mathscr{A}_{x}\}$ is the polar set of $\mathscr{A}_{x}$ in $\mathscr{A}^{1}.$
If we suppose $\mathscr{A}, \mathscr{B}$ are acceptability subsets of $\mathscr{R}^{\infty},$ such that $\mathscr{A} \subset \mathscr{B}$, then $\mathscr{B}^{0} \subset \mathscr{A}^{0}$ holds for the equivalent polar sets in $\mathscr{A}^{1}.$
The above dual representation equals $-\rho_{x}(X) = \inf_{\pi \in \mathscr{A}^{0}_{x}}\pi(X).$ Moreover, $X \in \mathscr{A}_{x}$ implies $-\rho_{x}(X) \geqslant 0$. Therefore, given a family $(\rho_{x})_{x \in \rio_+}$ of coherent risk measures for processes which is monotone with respect to $\mathscr{R}^{\infty}_{+},$ where $\mathscr{R}^{\infty}_{+}$ contains those bounded c\`{a}dl\`{a}g paths $X \geqslant 0,$ this entails $\mathscr{R}^{\infty}_{+} =\mathscr{A}_{0}$ and consequently if $x >0$ then $\mathscr{A}^{0}_{x} \subset \mathscr{A}^{0}_{0}.$ For such AIs arbitrage consistency holds, because
    \begin{eqnarray*}
      \alpha(X) &=& \sup  \left\{x \in \rio_+ \bigg| \inf_{\pi \in \mathscr{A}^{0}_{x}}\pi(X) \geqslant 0  \right\}\\
                &=& \sup\{x \in \rio_+ | -\rho_{x}(X) \geqslant 0\}= \infty.
    \end{eqnarray*}
\noindent Hence, we proved the following:
\begin{theorem}\label{theorem:ArbCons}
    The AI $\alpha: \mathscr{R}^{\infty} \rightarrow [0,\infty]$ defined through a family of monotone coherent risk measures for processes in $\mathscr{R}^{\infty}_{+}$ and the order unit $\textbf{1}$ of $\mathscr{R}^{\infty}$ is Arbitrage Consistent with respect to $\mathscr{R}^{\infty}_{+}.$
\end{theorem}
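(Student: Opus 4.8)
The plan is to establish the defining property of arbitrage consistency directly, namely that $\alpha(X)=\infty$ for every arbitrage $X \in \mathscr{R}^{\infty}_{+}$ (every nonnegative bounded c\`{a}dl\`{a}g cash flow). I would begin from the representation of Theorem~\ref{theorem:Second} in its equivalent risk form~(\ref{eq:AIoP-Risk}), writing
$$\alpha(X) = \sup\left\{ x \in \rio_+ \,\big|\, -\rho_x(X) \geqslant 0 \right\},$$
and then argue that, for a nonnegative path, the set inside the supremum exhausts all of $\rio_+$, so that the supremum is $+\infty$.

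The core step is to verify that $-\rho_x(X) \geqslant 0$ for \emph{every} $x \in \rio_+$ whenever $X \geqslant 0$. Here I would invoke the dual representation $-\rho_x(X) = \inf_{\pi \in \mathscr{A}^{0}_x}\pi(X)$ made available by the AM--AL duality between $\mathscr{R}^{\infty}$ and $\mathscr{A}^1$. The monotonicity hypothesis on the family $(\rho_x)_{x \in \rio_+}$ identifies the base acceptance set as $\mathscr{A}_0 = \mathscr{R}^{\infty}_{+}$, which in turn yields the polar inclusion $\mathscr{A}^{0}_x \subset \mathscr{A}^{0}_0$ for $x>0$. Since the infimum of a fixed functional over a smaller index set can only increase, I obtain $-\rho_x(X) = \inf_{\pi \in \mathscr{A}^{0}_x}\pi(X) \geqslant \inf_{\pi \in \mathscr{A}^{0}_0}\pi(X) = -\rho_0(X)$. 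Because $X \in \mathscr{R}^{\infty}_{+} = \mathscr{A}_0$ forces $-\rho_0(X) \geqslant 0$, I conclude $-\rho_x(X) \geqslant 0$ for all $x$, whence $\{x \in \rio_+ \mid -\rho_x(X) \geqslant 0\} = \rio_+$ is unbounded and $\alpha(X)=\infty$.

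The step I expect to be the main obstacle is the order bookkeeping under the duality: confirming that each $\pi \in \mathscr{A}^{0}_x$ acts nonnegatively on the positive cone, that the identification $\mathscr{A}_0 = \mathscr{R}^{\infty}_{+}$ genuinely follows from monotonicity, and that the polar inclusion runs in the direction needed, so that positivity of $X$ transfers to $\langle X,A\rangle \geqslant 0$ through the dual pairing. As a cleaner fallback I would bypass the polar sets altogether: each $\rho_x$ is by construction a coherent, hence monotone and normalized, risk measure for processes, so $X \geqslant 0$ forces $\rho_x(X) \leqslant \rho_x(0) = 0$ directly, giving $-\rho_x(X) \geqslant 0$ for every $x$ and again $\alpha(X)=\infty$.
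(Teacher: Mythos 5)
Your main argument is essentially the paper's own proof: the same dual representation $-\rho_x(X)=\inf_{\pi \in \mathscr{A}^{0}_{x}}\pi(X)$, the same identification $\mathscr{A}_{0}=\mathscr{R}^{\infty}_{+}$ from the monotonicity hypothesis, the same polar inclusion $\mathscr{A}^{0}_{x} \subset \mathscr{A}^{0}_{0}$, and the same conclusion that the set $\{x \in \rio_+ \mid -\rho_x(X) \geqslant 0\}$ is all of $\rio_+$, so $\alpha(X)=\infty$. Your fallback (coherence gives $\rho_x(0)=0$, so monotonicity forces $\rho_x(X)\leqslant 0$ for every $X \geqslant 0$ and every $x$) is in fact the cleaner route, since it bypasses the polar-inclusion step whose direction you rightly flagged as delicate: the paper asserts $\mathscr{A}^{0}_{x} \subset \mathscr{A}^{0}_{0}$ even though the acceptance sets decrease in $x$, which under standard polarity would reverse that inclusion.
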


\noindent Now we come to the extension of the expectation consistency stated in \cite{ChernyMad:2009} for the static case. We again transfer the properties of AIs to the dual system $\left \langle \mathscr{R}^{\infty}, \mathscr{A}^{1} \right \rangle $ from the dual system $\left \langle L^{\infty}, L^{1}\right \rangle.$
\begin{defi}\label{defi:ExpCons1}
    An AI $\alpha: \mathscr{R}^{\infty} \rightarrow [0,\infty]$ is called expectation consistent, if and only if $w(X) >0,$ then $\alpha(X) >0$. The functional $w$ is the one which corresponds to $\textbf{1} \in L^{1}.$
\end{defi}
\begin{prop}
    An AI for processes $\alpha,$ defined on $\mathscr{R}^{\infty}$ is expectation consistent if the level set of zero is $\mathscr{R}^{\infty}_{+}.$
\end{prop}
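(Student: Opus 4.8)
The plan is to mirror the route used for arbitrage consistency (Theorem \ref{theorem:ArbCons}), exploiting the dual representation of the level-$x$ risk measures together with the special role of the functional $w$. Recall that for each $x \in \rio_+$ the coherent risk measure $\rho_x$ satisfies $-\rho_x(X)=\inf_{\pi \in \mathscr{A}^0_x}\pi(X)$, and that $\alpha(X)=\sup\{x \in \rio_+ \mid -\rho_x(X)\geqslant 0\}$. Hence $\alpha(X)>0$ is equivalent to the existence of a \emph{strictly positive} level $x$ at which $-\rho_x(X)\geqslant 0$, i.e. $X \in \mathscr{A}_x$. The functional $w(\cdot)=\langle \cdot, A_w\rangle$, where $A_w \in \mathscr{A}^1$ is the element corresponding to $\textbf{1}\in L^1$, is a nonnegative, increasing, unit-variation bi-variate process, so $A_w \in \mathscr{Q}_\sigma$ and $w$ is the process analogue of the reference expectation.

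First I would translate the hypothesis into dual language. The assumption that the zero level set is $\mathscr{A}_0=\mathscr{R}^{\infty}_+$ gives, upon taking polars in the AM--AL duality $\langle \mathscr{R}^{\infty},\mathscr{A}^1\rangle$, that $\mathscr{A}^0_0=(\mathscr{R}^{\infty}_+)^0$ is exactly the cone of positive functionals of $\mathscr{A}^1$. Since $w$ is positive it lies in $\mathscr{A}^0_0$, and since the family is monotone with $\mathscr{A}^0_x\subset \mathscr{A}^0_0$ for $x>0$ (as recorded in Section \ref{sec:Arbitr-Expec}), the reference functional $w$ controls the level-$x$ supporting sets. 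This is the structural fact I would use to pass from a statement about $w(X)$ to one about $\inf_{\pi \in \mathscr{A}^0_x}\pi(X)$.

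Second, I would argue by contraposition: assume $\alpha(X)=0$ and deduce $w(X)\leqslant 0$. If $\alpha(X)=0$ then $X \notin \mathscr{A}_x$ for every $x>0$, so for each such $x$ there is a supporting kernel $\pi_x \in \mathscr{A}^0_x$ with $\pi_x(X)<0$. Letting $x\downarrow 0$ and using that the supporting family contracts onto the kernels already supporting $\mathscr{A}_0=\mathscr{R}^{\infty}_+$ --- in particular onto the reference functional $w$ --- I would extract a limit along which $\pi_x(X)\to w(X)$, forcing $w(X)\leqslant 0$. Equivalently, in the forward direction, $w(X)>0$ keeps $\inf_{\pi \in \mathscr{A}^0_x}\pi(X)$ nonnegative for all sufficiently small $x>0$, so that $X \in \mathscr{A}_x$ and $\alpha(X)\geqslant x>0$.

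The main obstacle is precisely this limiting step: turning the \emph{strict} inequality $w(X)>0$ into membership of $X$ in a level set $\mathscr{A}_x$ for a \emph{strictly positive} $x$. It is not enough that $w$ belong to $\mathscr{A}^0_0$; one must control the behaviour of the supporting sets $\mathscr{Q}^x_\sigma=\mathscr{A}^0_x$ as $x\downarrow 0$ and know that $\mathscr{Q}^0_\sigma=\cap_{y>0}\mathscr{Q}^y_\sigma$ contracts onto the reference kernel. I would secure this by invoking the minimality and closedness of the supporting family (Lemmas \ref{lemma:Support} and \ref{lemma:Closedness}) together with a weak-$*$ compactness argument on the unit-variation set $\mathscr{Q}_\sigma \subset \mathscr{A}^1$, so that the infimum defining $-\rho_x(X)$ is attained and behaves upper-semicontinuously in $x$. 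The Fatou continuity of $\alpha$ and the dominated-convergence estimate already used in the proof of Theorem \ref{theorem:First} are exactly the tools that legitimise this passage to the limit.
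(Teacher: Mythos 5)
Your argument breaks at its central step: the claim that, as $x \downarrow 0$, the supporting sets contract onto the reference kernel $w$ so that $\pi_x(X) \to w(X)$. Under the hypothesis $\mathscr{A}_0=\mathscr{R}^{\infty}_+$, the polar $\mathscr{A}^0_0=(\mathscr{R}^{\infty}_+)^0$ is the \emph{whole} cone of positive functionals of $\mathscr{A}^1$, not the ray through $w$; and Lemma \ref{lemma:Closedness} only gives $\mathscr{Q}^x_{\sigma}=\cap_{y>x}\mathscr{Q}^y_{\sigma}$, so the intersection of the supporting sets as $x\downarrow 0$ is this entire positive cone. Hence any cluster point $\pi^*$ of your kernels $(\pi_x)$ is merely \emph{some} positive functional with $\pi^*(X)\leqslant 0$; such functionals exist trivially whenever $X$ is negative on a set of positive probability (concentrate the mass of the bi-variate process there), so the conclusion $w(X)\leqslant 0$ does not follow and the contraposition does not close. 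The forward version fails for the same reason: if $X\not\geqslant 0$, the bipolar theorem yields $\pi_0\in\mathscr{A}^0_0$ with $\pi_0(X)<0$, and since polars reverse inclusions ($\mathscr{A}_x\subset\mathscr{A}_0$ gives $\mathscr{A}^0_0\subset\mathscr{A}^0_x$, the direction stated in Theorem \ref{theorem:Second}; the inclusion you quote from Section \ref{sec:Arbitr-Expec} runs the other way and cannot be used to put $w$ inside every $\mathscr{A}^0_x$), one gets $\inf_{\pi\in\mathscr{A}^0_x}\pi(X)\leqslant\pi_0(X)<0$ for \emph{every} $x>0$, however small, even when $w(X)>0$. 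So no ``sufficiently small $x$'' exists. What produces expectation consistency in concrete cases is a quantitative collapse of the level-$x$ kernels onto the single kernel $w$ as $x\downarrow 0$ --- e.g. the RAROC family $\tilde A=\frac{1}{1+x}B+\frac{x}{1+x}A$ of Section \ref{sec:Examples}, where the weight on $w$ tends to one --- and that is additional structure on the family $(\mathscr{Q}^x_{\sigma})_{x\in\rio_+}$, not a consequence of $\mathscr{A}_0=\mathscr{R}^{\infty}_+$ alone.

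Two further points. First, your compactness appeal is unsupported: $\mathscr{A}^1$ is an AL-space (Proposition \ref{prop:partial-order-A}), i.e. of $L^1$-type, so the norm-bounded set $\mathscr{Q}_{\sigma}$ is not weakly relatively compact without a uniform-integrability-type condition, and weak-$*$ cluster points formed in $(\mathscr{R}^{\infty})^*$ need not lie in $\mathscr{A}^1$ at all; Lemmas \ref{lemma:Support} and \ref{lemma:Closedness} give maximality and norm-closedness, not compactness, so the infimum defining $-\rho_x(X)$ need not be attained nor behave continuously in $x$. Second, for comparison: the paper's own proof is a one-sentence monotonicity remark --- it invokes only the nesting $\mathscr{A}_x\subset\mathscr{A}_0$ and then asserts the implication $w(X)>0\Rightarrow\alpha(X)>0$, with no dual or limiting argument whatsoever. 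Your route is genuinely different and far more ambitious, but the gap sits exactly in the added machinery, and under the stated hypothesis alone it cannot be filled.
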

\begin{proof}
    If the above condition holds, since $\mathscr{A}_{x} \subset \mathscr{A}_{0},$ for every $x \in \rio_+,$
    where as usual $\mathscr{A}_{x}=\{X \in \mathscr{R}^{\infty}| \alpha(X) \geqslant x\}$, we notice that if $w(X) >0$, this implies $\alpha(X) >0.$
\end{proof}

%-------------------------------------------paragrafo1----------------------------------------------------------
\section{An AI for Processes}\label{sec:Examples}

Assume that $X \in \mathscr{R}^{\infty}$ describes the continuous-time cumulative random return over a finite horizon, and  without loss of generality that the interest rates are zero (avoiding to treat excess returns). We propose to characterize the following AI:
\begin{equation}\label{eq:RAROC}
    \alpha(X):=\frac{\mean(X_T)}{\rho(X)},
\end{equation}
where the denominator represents a coherent risk measure for adapted bounded c\`{a}dl\`{a}g processes, with the convention $\alpha(X) = \infty$ whenever $\rho(X) \leqslant 0.$
The above measure is reminiscent of the SR given by $\frac{\mean(X_T)}{\sd(X_T)},$
where the denominator is the usual standard deviation of the terminal total cumulative return; the numerator measures the expected reward of the underlying investment just at the horizon. Thus, equation (\ref{eq:RAROC}) is a RAROC-type of performance measure provided that the expectation in the numerator is positive. To see why $\alpha(X)$ is an AI for processes, we need to find
the bi-variate process picked from  $\mathscr{Q}_{\sigma}^x$ which is consistent with the representation (\ref{eq:repre1}).
The right choice is the convex combination:
    $$\tilde{A}:=\frac{1}{1+x} B + \frac{x}{1+x} A, \;\;\; x \in \rio_+,\; A,B \in \mathscr{D}_{\sigma}$$
where
    $$B=(B^{\text{pr}}_t, B^{\text{op}}_t)_{t \in [0,T]}:= (0, \ind_{\{u \leqslant t\}})_{t \in [0,T]}.$$
In fact, we have the chain of equivalences for $x>0:$
    \begin{eqnarray*}
      \alpha(X)\geqslant x & \Longleftrightarrow & \frac{\mean(X_T)}{\rho(X)} \geqslant x \\
      & \Longleftrightarrow & \mean(X_T) \geqslant - x \cdot \inf_{A \in \mathscr{D}_{\sigma}} \langle X, A \rangle \\
      & \Longleftrightarrow & \frac{1}{1+x} \, \mean(X_T) + \frac{x}{1+x}\, \inf_{A \in \mathscr{D}_{\sigma}} \langle X, A \rangle \geqslant 0 \\
      & \Longleftrightarrow &  \inf_{A \in \mathscr{D}_{\sigma}} \left[ \frac{1}{1+x} \, \mean(X_T) + \frac{x}{1+x} \langle X, A \rangle  \right] \geqslant 0 \\
      & \Longleftrightarrow &  \inf_{A \in \mathscr{D}_{\sigma}} \left\langle X,\frac{1}{1+x}\cdot B +  \frac{x}{1+x} \cdot A \right\rangle  \geqslant 0 \\
      & \Longleftrightarrow &  \inf_{\tilde{A} \in \mathscr{Q}_{\sigma}^x} \langle X, \tilde{A} \rangle \geqslant 0.
    \end{eqnarray*}
We use Lemma \ref{lemma:Closedness} for the closeness feature of the sets $\mathscr{Q}_{\sigma}^x$ supporting this RAROC-type measure.
In the convex combination defining $\tilde{A},$ the first term $B$ projects the whole random return $X$ onto the terminal date $T$ through the expectation; the second term $A$ pertains to the representation of the coherent measure for processes $\rho(X).$ Note that $\mean(X_T)< \infty$
by the assumption $X \in \mathscr{R}^{\infty}.$
\begin{remark}
The functional $\rho(X):= - \inf_{A \in \mathscr{D}_{\sigma}} \langle X- \mean(X_T), A \rangle$ is obviously a monetary coherent risk measure for $X \in \mathscr{R}^{\infty}.$ By Remark \ref{remark:AccSet} in Section \ref{sec:RepreTH} it induces a reward-risk separation for acceptability at level $x \in \rio_+,$ because $X \in \mathscr{A}_x$ implies the non-negativity of the corresponding $\rho_x(X)$ and thus
    $$\alpha(X) = \frac{\mean(X_T)}{\rho(X)} \geqslant x \;\Longleftrightarrow\;
    \mean(X_T) + \inf_{A \in \mathscr{D}_{\sigma}} \langle X- \mean(X_T), A \rangle \geqslant 0.$$
\end{remark}
\noindent If one chooses the coherent risk measure for bounded paths
   \begin{equation}\label{eq:OurAI}
      \rho(X)=\AVaR_{\gamma}(\inf_{t \in [0,T]}X_t),
   \end{equation}
then the above AI can be made operational. Here $\AVaR_{\gamma}$ is the Average Value-at-Risk at the level $\gamma \in (0,1].$
In fact, a risk measure $\rho(X)$ for bounded processes can be viewed as
    $$\mathscr{R}^{\infty} \xrightarrow{\;\;\; \theta \;\;\;}
            L^{\infty} \xrightarrow{\;\;\; \tilde{\rho} \;\;\;} \rio_+,$$
the composition of a path-transformation with a one-period risk measure $\tilde{\rho}$ applied to the resulting random variable $\theta(X).$ Obviously, $\rho(X)$ would be a coherent monetary risk measure for bounded processes if and only if $\tilde{\rho}$ is a coherent monetary risk measure for single-period cumulative returns, and $\theta$ transforms the paths of $X$ in such a way the properties studied in \cite{CheriEtAl:2004} are preserved. Equivalently, the combined effect of a path-transformation and a static risk measurement is
    \begin{equation}\label{eq:alternRepre}
        \rho(X):=\tilde{\rho}(\theta(X))=-\inf_{Z \in L^1_+,\;\mean(Z)=1}\mean\left[\theta(X) \cdot Z \right].
    \end{equation}
In the current setting, $\theta$ is the running minimum of $X.$ It is important to note that other types of path-transformations can be taken into account, see  \cite[Examples 5.2, 5.5]{CheriEtAl:2004} where $\theta(X)= \frac{1}{T} \int_0^T X_t \ud t$ and it might be viewed as the continuous-time arithmetic average price of the underlying of an Asian option. Anyway, the acceptability of the proposed $\rho(X)$ stems from the coherence of the static $\tilde{\rho}= \AVaR$ together with the monotonicity of the running minimum (properties (1) and (2) are not destroyed). Moreover, the law invariance of $\AVaR$ implies that of $\alpha.$ Clearly, it is also expectation consistent but never arbitrage consistent. Recall that
    $$\AVaR_{\gamma}(\inf_{t \in [0,T]}X_t)= \frac{1}{\gamma} \int_0^{\gamma} \VaR_s(\inf_{t \in [0,T]}X_t) \ud s,$$
where as usual the VaR is defined as the negative of the $s$-quantile of the running minimum's distribution,
    $- \inf\left\{x \in \rio \,| \, \proba(\inf_{t \in [0,T]}X_t \leqslant x) \geqslant s \right\},$
and $\gamma \in (0,1].$

\noindent The widespread CR
    \begin{equation}\label{eq:CalRatio}
        \CR(X) := \frac{\mean(X_T)}{\mean(\sup_{t \in [0,T]} D_t)}, \quad \text{for} \; X \in \mathscr{R}^{p},
       \; \text{and}\;\; p \in [1,\infty],
    \end{equation}
is a classical performance measure depending on the whole investment's path, but fails to be acceptable as we see below.
Here $D=(D_t)_{t \in [0,T]}$ is the drawdown process over $[0,T]$ of the random return $X,$ defined as $D_t:=\sup_{u \in [0,t]} X_u - X_t,$
i.e., it is the drop of $X$ from its running maximum, while the denominator in equation (\ref{eq:CalRatio}) is the
maximum drawdown, i.e. the \emph{greatest} drop of $X$ from its running maximum over the whole horizon (the supremum of $X$ reflected at its running supremum). From now on we assume positive performance indices for processes, whenever $\mean(X_T) > 0$ otherwise the ratio (\ref{eq:CalRatio}) is zero. CR is meant to quantify the expected terminal return of the investment, adjusted by the risk not only at the final date but also including all possible drops from the peaks during the horizon. The easy verification that CR is not an AI for processes is due to the bad behavior of the path-transformation $\theta(X)=\sup_{t \in [0,T]} D_t$ which is not monotone. One can try to replace the expectation $\mean=\tilde{\rho}$ (as the one-period risk functional) with the tail-conditional expectation (i.e. the $\AVaR$ defined on the right tail of the distribution of the maximum drawdown) which is a one-period coherent risk measure, but the lack of monotonicity of the aforementioned $\theta(X)$ destroys acceptability (only convexity is preserved).
\begin{remark}
For two bounded c\`{a}dl\`{a}g cash flows $X,Y \in \mathscr{R}^{\infty}$ such that $X \slaw Y,$ the concept of law invariant AI developed in \cite{ChernyMad:2009} in the static case can be translated int the current setting by limiting ourselves to the case of RAROC-type AIs (\ref{eq:RAROC}). After the corresponding path-transformation is made, the sameness in law of any couple of bounded  c\`{a}dl\`{a}g cash flows then translates to $\theta(X) \slaw \theta(Y),$ thus their transformed paths entail random variables sharing the same probability distribution. We can appeal to \cite[Theorem 5]{ChernyMad:2009}. Firstly, the coherent risk measure for processes $\rho(X)$ in the representation of $\alpha(X)$ can be based on the \emph{weighted VaR}, i.e. the spectral representation $\int_{(0,1]} \AVaR_{\gamma}(\theta(X)) \mu(\ud \gamma)$ with a Borel probability measure $\mu$ on the unit interval. In fact, this in turn is equivalent to the representation of the path-dependent risk measure (\ref{eq:alternRepre}) through a concave distortion, $\tilde{\rho}(\theta(X))=- \int_{\rio} y \, \ud (\Psi_x(F_{\theta(X)}(y))).$  Then, for every $x \in \rio_+$ one defines AI as in the static case by specifying the concave distortion $\Psi_x(u):= \min\{\gamma^{-1} \, u, 1\}$ with $\gamma=1+x$ and $\mu = \delta_{1+x};$ the choice of $Z$ additionally needs $\mean((Z-u)^+) \leqslant \Phi_x(u)$ for every $u \in \rio_+$ such that $\Phi$ is the convex conjugate of the concave distortion. As a by product, $\alpha(X)$ is also consistent with SSD as developed in Section \ref{sec:Further-Pr1}.
\end{remark}

%-------------------------------------------paragrafo1----------------------------------------------------------
\section{Numerical Comparison}\label{sec:NumExamples}

To give more insight on the behavior of the performance indices discussed in the previous Section, we present here
a simulation exercise based on the following ingredients:
\begin{itemize}
  \item Generation of two L\'{e}vy processes $X,Y$ to describe possible patterns of continuously compounded returns over the horizon
  $[0,T],$ with $T=$1 year;
  \item Determination of six empirical distribution functions $\emdf_n^{X_T},\emdf_n^{\underline{X}},\emdf_n^{\overline{DX}},\emdf_n^{Y_T},\emdf_n^{\underline{Y}},\emdf_n^{\overline{DY}}$
  for the final returns $X,Y$ at $T$ and the corresponding running minimums and maximum drawdowns within $[0,T];$
  \item Estimation of the sample counterparts $\CR_{n},\alpha_{n,\gamma}$ of the CR and the ratio (\ref{eq:OurAI}) presented in Section \ref{sec:Examples}, under the two alternative distributional assumptions;
  \item Comparison of the numerical values deduced from the estimated statistics.
\end{itemize}
\noindent We admittedly carry on the numerical simulation in the unbounded case, albeit as usual the discretization scheme gives bounded sample paths. Thus, modulo any asymptotic consideration we use the results from the simulated paths as an approximation for the bounded case. Due to our main interest in the final step of the above simulation recipe, we choose two parsimonious models of L\'{e}vy processes:
    \begin{enumerate}
      \item A Brownian motion $X$ with constant drift $\mu >0$ modelling the annual expected continuously compounded return, and constant annual volatility $\sigma>0$ of continuous returns;
      \item A Non-normal jump-diffusion $Y$ with the same diffusion part, as suggested by Kou \cite{Kou:2002}.
    \end{enumerate}
\noindent Other models are available for simulating different L\'{e}vy processes such as stable, variance gamma, hyperbolic, stable etc. We rest on the simpler which do not require either the use of special functions nor any series representation. Moreover, we keep to the minimum the difficulty of have no explicit L\'{e}vy measure associated to the processes. Recall that the Kou's model for $Y$ contains four additional parameters: the annual intensity $\lambda$ of a homogeneous Poisson process counting the number of jumps for the non-diffusion part; $\eta_1,\eta_2>0$ such that their reciprocal represent the means of upward and downward deviation from the drift, taken from a double exponential distribution with asymmetric Laplace density; the probability $0<p<1$ of upward jumps. The diffusion part and the jump part are independent, for more details see \cite{Kou:2002}. We set up a routine for the generation of the sample paths of $X$ and $Y$ over a equally-spaced discretization of the horizon by a grid of 1000 time points, through the usual Euler scheme for the corresponding stochastic differential equations with annual $\mu =.15,$ annual $\sigma=.20,$ annual $\lambda=10$ and $\frac{1}{\eta_1}=.02,\frac{1}{\eta_1}=.04.$ This setting entails a mean number of 10 jumps per year with average size $2.2\%$ and jump volatility $4.47\%$ as suggested by \cite{Kou:2002}. The simulation is straightforward because of the independence between the diffusion and the jump part. We compute the final values, the running minimums and the maximum drawdowns of the simulated paths, then we repeat this last step 1000 times as well to get the estimated empirical distribution functions $\hat{\emdf}_n^{X_T},\hat{\emdf}_n^{\underline{X}},\hat{\emdf}_n^{\overline{DX}},$ $\hat{\emdf}_n^{Y_T},\hat{\emdf}_n^{\underline{Y}},\hat{\emdf}_n^{\overline{DY}},$ with $n=1000.$
%%%%%%%%%%%%%%%%%%%%%%%%%%%%%%%%%%%%%%%%%%%
\begin{table}[!ht]
\begin{center}
\begin{minipage}{100mm}
\caption{Summary Statistics of Simulated Values}  %%%\centering
{\begin{tabular}{lcccccc}
\toprule
   & $\hat{\emdf}_n^{mX}$ & $\hat{\emdf}_n^{DX}$ & $\hat{\emdf}_n^{mY}$ & $\hat{\emdf}_n^{DY}$ & $\hat{\emdf}_n^{X_T}$ & $\hat{\emdf}_n^{Y_T}$ \\
\toprule    %\colrule
  Skewness       & $-1.6448$ & 1.6036 & $-2.2125$ & 1.5868 & $-0.1415$ & 0.4063\\
  Kurtosis $-3$  & 4.0113    & 4.3742 & 7.1043    & 3.7368 & $-0.0784$ & 0.5070\\
  Median         & $-0.0351$ & 0.0807 & $-0.0283$ & 0.0925 & 0.1212    & 0.3756\\
  Mean           & $-0.0452$ & 0.0879 & $-0.0413$ & 0.1040 & 0.1206    & 0.3946\\
  St. Dev.       & 0.0369    & 0.0328 & 0.0401    & 0.0454 & 0.1006    & 0.2192\\
\toprule    %\botrule
\end{tabular}}
%%\tabnote{}
\label{SummStat}
\end{minipage}
\end{center}
\end{table}
%%%%%%%%%%%%%%%%%%%%%%%%%%%%%%%%%%%%%%%%%%%
    \begin{figure}\caption{\small Empirical density of the Brownian-return's running minimum.}\label{fig:RunnMinBm}     \centering
        \includegraphics[scale=.45]{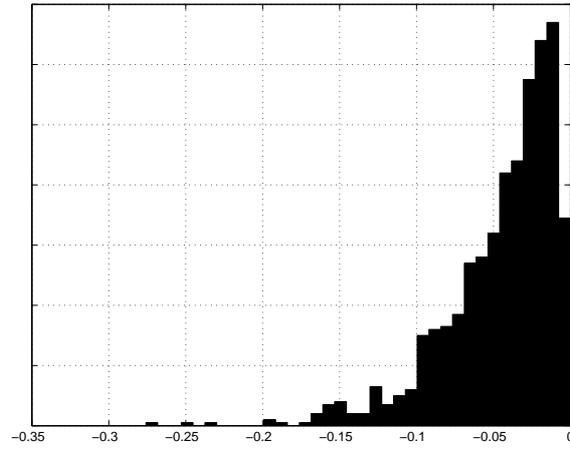}

    \end{figure}
    \begin{figure}\caption{\small Empirical density of the jump-diffusion-return's running minimum.}\label{fig:RunnMinJD} \centering
        \includegraphics[scale=.45]{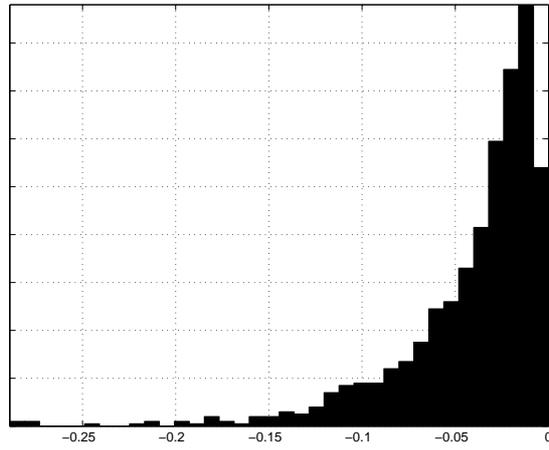}
    \end{figure}
    \begin{figure}\caption{\small Empirical density of the Brownian-return's maximum drawdown.}\label{fig:MaxDDBm} \centering
        \includegraphics[scale=.45]{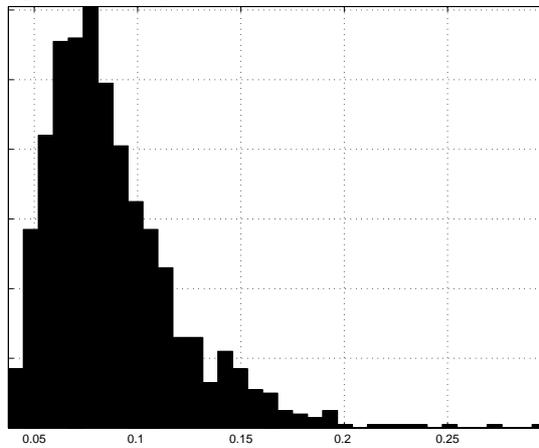}
    \end{figure}
    \begin{figure}\caption{\small Empirical density of the jump-diffusion-return's maximum drawdown.}\label{fig:MaxDDJD} \centering
        \includegraphics[scale=.45]{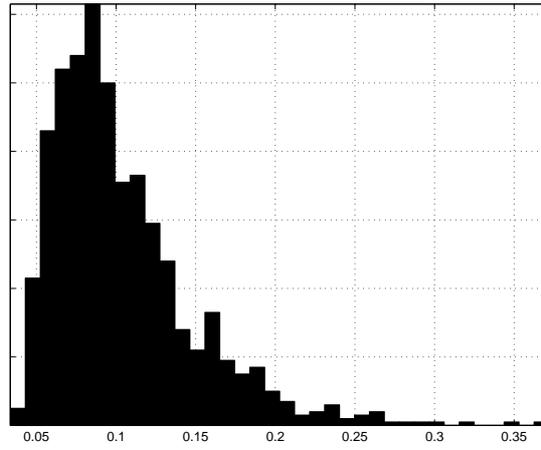}
    \end{figure}
    \begin{figure}\caption{\small Empirical density of the final Brownian return.}\label{fig:FinalBm} \centering
        \includegraphics[scale=.45]{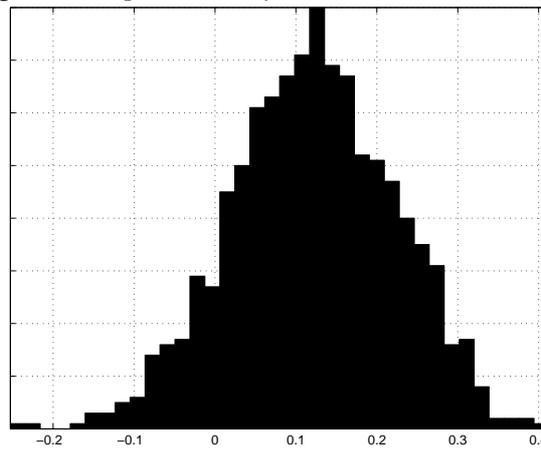}
    \end{figure}
    \begin{figure}\caption{\small Empirical density of the final jump-diffusion return.}\label{fig:FinalJD} \centering
        \includegraphics[scale=.45]{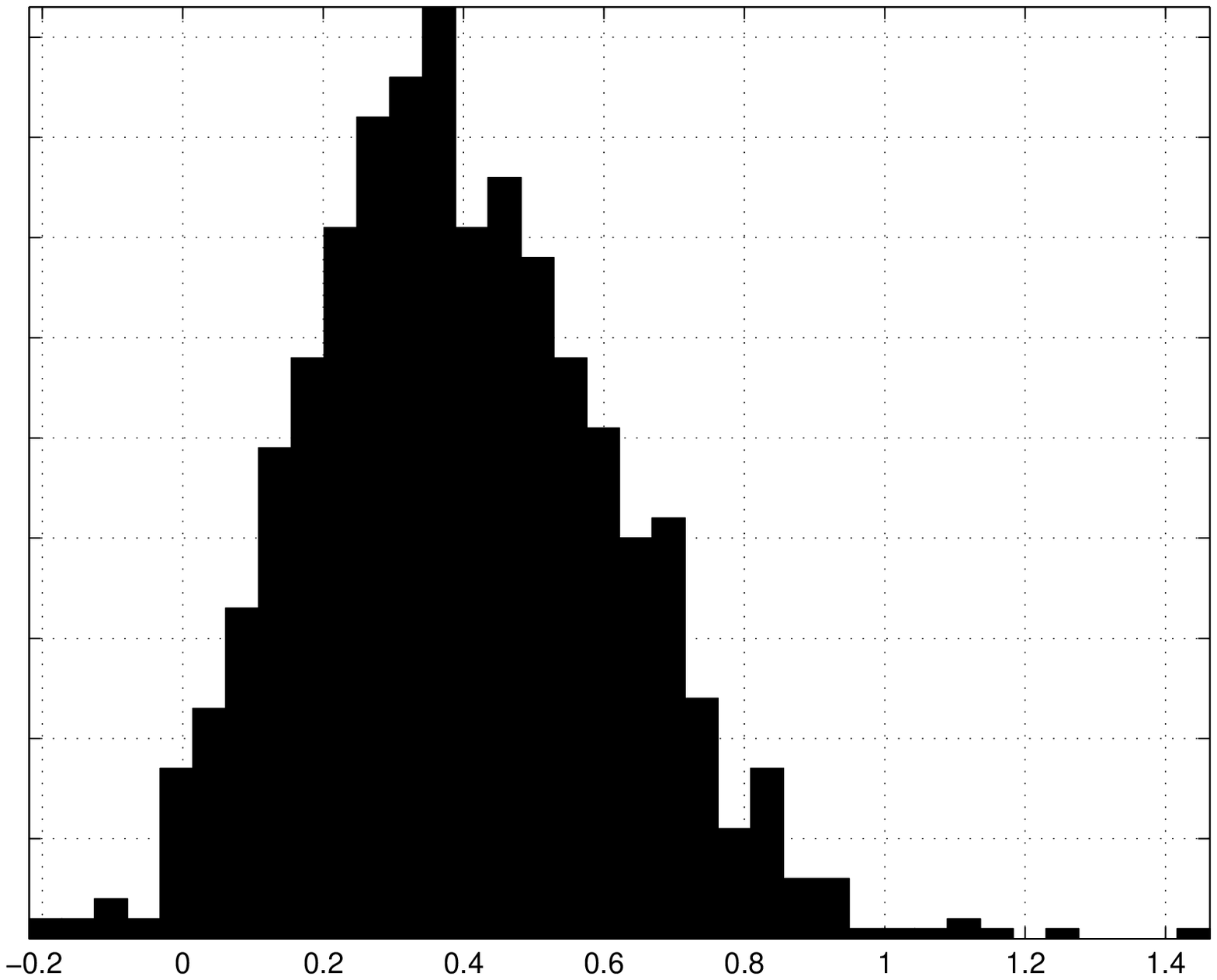}
    \end{figure}
\noindent Analyzing Table \ref{SummStat} together with Figures \ref{fig:RunnMinBm}-\ref{fig:FinalJD}, it is quite evident (as one expects) how the only \emph{quasi} symmetric distribution is that of the simulated final return under the Brownian's law. Indeed, even when this is the assumption one has non-symmetric distributions concerning the running minimum, the maximum drawdown and the one related to the final return under the jump-diffusion's law. The latter put more probability weight to negative (in the case of running minimum) and positive (in the case of maximum drawdown) scenarios of extreme return changes over the horizon. Clearly, these asymmetries affects the ex ante performance measurement. In fact, the Brownian model $X$ is assumed to be a benchmark of normal market values, while the jump-diffusion model $Y$ allows for leptokurtic and asymmetric returns. In presence of shocks, price changes due to good or bad news result in return movements as either overreaction or underreaction of market, according to fat tails and high peak of the jumps distribution.

\noindent Considering the sample $(X_T^i)_{i=1,\ldots,n}$ from the simulation of terminal paths, the sample $(\overline{DX}^i)_{i=1,\ldots,n}$ from the simulation of the maximum drawdown and the sample $(\underline{X}^i)_{i=1,\ldots,n}$ from the simulation of the running minimum, we compute the following estimators in the Brownian case:
\begin{itemize}
  \item $\CR_{n}=\frac{1/n\sum_{i=1}^n X_T^i}{1/n\sum_{i=1}^n \overline{DX}^i};$
  \item $\alpha_{n,\gamma}=\frac{1/n\sum_{i=1}^n X_T^i}{1/k\sum_{i=k}^n \underline{X}^{(k)}},$ with $k:= [n\gamma]$ the greatest integer less than or equal to $n\gamma$ and $(\underline{X}^{(i)})_{i=1,\ldots,n}$ being the corresponding ordered sample; we set $\gamma=.01$ and $\gamma=.05;$
  \item $\SR_n^B=\frac{1/n\sum_{i=1}^n X_T^i}{\left(\frac{1}{n}\sum_{i=1}^n \underline{X}_i^2- \left(\frac{1}{n}\sum_{i=1}^n \underline{X}_i \right)^2 \right)^{1/2}};$
  \item $\SR_n^J=\frac{1/n\sum_{i=1}^n X_T^i}{\left(\frac{1}{n}\sum_{i=1}^n \overline{DX}_i^2- \left(\frac{1}{n}\sum_{i=1}^n \overline{DX}_i \right)^2 \right)^{1/2}}.$
\end{itemize}
\noindent The last two estimators refer to SR where the standard deviation is based on the running minimum and the maximum drawdown, respectively. The estimators in the case of the jump-diffusion $Y$ are obtained in the same way. All the proposed performance indices provide a reward measurement in terms of final expected return.
%%%%%%%%%%%%%%%%%%%%%%%%%%%%%%%%%%%%%%%%%%%
\begin{table}[!ht]
\begin{center}
\begin{minipage}{100mm}
\caption{Simulated Values of Performance Indices}
{\begin{tabular}{lccc}
\toprule
   &  & Brownian motion  & Jump-diffusion \\
\toprule    %\colrule
Calmar ratio   & $\hat{\CR}_{n}$ & 1.3718 & 2.8133\\
$\gamma =.05$  & $\hat{\alpha}_{n,\gamma}$ & 0.7950 & 1.3130\\
$\gamma =.01$  & $\hat{\alpha}_{n,\gamma}$ & 0.5998 & 0.9182\\
Estimate for $\SR^B$   & $\frac{\mean(\text{final return})}{\sd(\text{running minimum})}$   & 3.2652 & 5.2582\\
Estimate for $\SR^J$  & $\frac{\mean(\text{final return})}{\sd(\text{maximum drawdown})}$  & 3.6670 & 5.2421
\end{tabular}}
%%\tabnote{}
\label{SimRatios1}
\end{minipage}
\end{center}
\end{table}
\noindent CR and its relatives (viz. the last tow rows of Table \ref{SimRatios1}) entail overestimation of financial performance, mainly due to the use of either the expectation or the standard deviation to build path-dependent risk measures which suffer from the aforementioned asymmetries (skewness, kurtosis and fat-tails). Our performance indices $\alpha_{\gamma}$ have smaller values. This maybe suggests a more prudent performance evaluation by taking into account adverse downside scenarios under the proper distributional assumption towards a forecast of phenomenon like margin calls, rebalancing of trading positions, counter-party risks, fund redemption. In this perspective, the index based on (\ref{eq:RAROC}) with coherent monetary risk measure for processes being the one proposed in Section \ref{sec:Examples}, gives a sensible trade-off between acceptability as a theoretical feature and the ability to capture extremes changes in the return profile during the whole holding period. Furthermore, the resulting measurement is compatible with the heightened need for performance-tracking tools that embed economic cost of risk, and the consequential use of the RAROC by financial institutions.
\begin{remark}
    When random cash flows are asymmetric and fat-tailed, performance evaluation using SR-type indices is questionable. Investors' preference for certain portfolio compositions heavily relies on more than the first two moments of the cash flow's distribution, and clear enough they could be wealth-seeking and risk-adverse but prefer portfolios with higher volatility. In these situations, the $\AVaR$ behaves better as a static risk measure and assuming a tangential portfolio (in the sense of minimum $\AVaR$ for a fixed tail probability $\gamma$) it constitute a sensible substitute for the volatility given by the standard deviation or the alike, when portfolio optimization comes into play and asset allocation or ranking of investment funds concerns can be tackled more efficiently. Observe in addition how all the estimators above are biased for finite samples but have asymptotic efficiency as $n$ grows more and more.
\end{remark}

%-------------------------------------------paragrafo1----------------------------------------------------------
\section{Conclusions}\label{sec:Concl}

Intra-horizon risk measurement has recently gained increasing consideration among academics and practitioners. For example, VaR may be improved in the time dimension by focusing on the return's distribution within the whole investment horizon. On the fund management side, performance analysis with drawdown is well developed and the evaluation and even the ranking of investment portfolios by measures such as CR is popular among fund managers. These indices are based on the risk-adjustment of returns by recording all the information about the evolution of processes modelling profit and losses within a fixed horizon. Motivated by these well documented facts and in addition by the growing use of the RAROC as a tool of banks' backward-looking manner at business performance-tracking and risk management, we propose a performance measurement for processes. Therefore, we look on one hand at the approach of Cherny \& Madan (2009) to static performance evaluation driven by the concept of AIs, and on the other hand to the work of Cheridito et al. (2004) which generalizes static coherent risk measures to monetary risk measures for processes. Our main contribution is to extend the representation of an AI as applied not to random variables giving terminal cash flow, but instead to stochastic processes modelling the evolution of cash flows over a finite horizon. The domain of an AI is now the class of bounded c\`{a}dl\`{a}g processes, with new representation results. Our contribution does not overlap that of Bielecki et al. (2014), since they treat dynamic coherent AIs with a focus on time-consistency, while we develop static AIs for processes. Notwithstanding, our framework can embody information about the whole investment process even if sequential conditioning is ruled out. We propose an AI expressed as the ratio of expected terminal cash flow to a coherent monetary risk measure for processes. Eventually, we compare different numerical values of this acceptability ratio and other performance measures, to appreciate the embedding of information about stressed scenarios concerning the whole horizon. Our ratio could be a sensible compromise between acceptability (as a desirable aggregate property) and dependency on the whole cash flow's path (as the quest of practitioners). This is also compatible with the widespread use of RAROC and fixes the lack of acceptability of CR. The research agenda on this topic will include the generalization of our AI to the unbounded case (e.g. the space $\mathscr{R}^1$), as well as a more comprehensive definition of law invariance of AIs for processes. Also a study of the asymptotic behavior of AIs when the horizon becomes infinite is desirable.

\appendix

\section{Appendix: Duality Relations for Processes}\label{appx}

The main results of Section \ref{sec:RepreTH} are based on the generalization of acceptability from spaces of random variables to spaces of stochastic processes, partly introduced in Section \ref{sec:AIoP}. To keep the present paper as much self-contained as possible, we list in this Appendix some facts about these spaces of processes and the corresponding duality relations following closely \cite{CheriEtAl:2004}. For $p \in [1,\infty]$ the collection
    $$\mathscr{R}^p := \left\{ X:[0,T]\times \Omega \to \rio\begin{array}{l|l}
                                      & X \; \text{c\`{a}dl\`{a}g}\\
                                      & (\mathscr{F}_t)\text{-adapted} \\
                                      & \left\| X\right\|_{\mathscr{R}^p} < \infty
                                  \end{array}
     \right\}$$
is a Banach space. Recall that increasing processes $A : [0,T]\times \Omega \to \rio$ (i.e. adapted, with positive right-continuous and increasing paths) induces a measure $\ud A_t(\omega).$ In case $A$ has right-continuous paths with finite variation, its unique decomposition $A = A^+ - A^-$ into two right-continuous increasing processes induces $\proba$-a.s. positive measures on $[0,T]$ with disjoint support. The total variation of such process is the random variable $\text{Var}(A):= A^+_T+A^-_T.$ Moreover, if $A$ is optional (i.e. a measurable on $[0,T]\times \Omega$ equipped with the $\sigma$-algebra generated by the adapted c\`{a}dl\`{a}g processes) then $A^+,A^-$ are optional. When $A$ is predictable (i.e. measurable on $[0,T]\times \Omega$ equipped with the $\sigma$-algebra generated by the adapted continuous processes) then $A^+,A^-$ are predictable. Thus, for $q \in [1,\infty]$ we have the collection
$$\mathscr{A}^q := \left\{ A:[0,T]\times \Omega \to \rio^2 \begin{array}{l|l}
                                      & A =(A^{\text{pr}},A^{\text{op}})\; \text{right-continuous, finite variation}\\
                                      & A^{\text{pr}}\;\text{predictable,}\;A_0^{\text{pr}}=0 \\
                                      & A^{\text{op}} \; \text{optional, purely discontinuous} \\
                                      & \text{Var}(A^{\text{pr}}) + \text{Var}(A^{\text{op}}) \in L^q
                                  \end{array}
     \right\}$$
This collection equipped with the norm $\norm{A}_{\mathscr{A}^q}:=\norm{\text{Var}(A^{\text{pr}}) + \text{Var}(A^{\text{op}})}_q$ is a Banach space. The subset $\mathscr{A}^q_+$ containing those $A \in \mathscr{A}^q$ with the predictable and optional parts being non-negative and increasing. The bilinear form
    $$\langle X,A \rangle := \mean\left[ \int_{(0,T]} X_{t^-}\ud A^{\text{pr}}_t + \int_{[0,T]} X_t\ud A^{\text{op}}_t \right]$$
defined on $\mathscr{R}^p \times \mathscr{A}^q$ for $p,q \in [1, \infty]$ such that $p^{-1}+q^{-1}=1,$ is also continuous and put in duality the spaces $\mathscr{R}^p$ and $\mathscr{A}^q.$ Indeed, it is well known that $|\langle X,A \rangle| \leqslant \left\| X\right\|_{\mathscr{R}^p} \left\| A\right\|_{\mathscr{A}^q}.$

\section{Appendix: Coherent Monetary Risk Measures for Processes}\label{appx2}

The apparatus introduced in Appendix \ref{appx}, is employed for extending the  structure theorem for coherent risk measures
$\rho : L^{\infty}  \to \rio:$ Given the $\sigma(L^{\infty},L^1)$-closed acceptance set $\mathscr{C}=\{X \in L^{\infty} | \, \rho(X) \leqslant 0\}$ for a discounted terminal cash flow $X,$ then
    $$\rho(X)= - \inf_{Q \in \mathscr{D}} \mean^{Q}(X) = - \inf_{Z \in L_+^1,\; \mean(Z)=1} \mean(X \cdot Z),$$
for a certain set $\mathscr{D}$ of probability measures absolutely continuous with respect to $\proba,$ with corresponding Radon-Nikodym derivatives $Z= \frac{\ud Q}{\ud \proba}.$ The typical proof of this result uses $\mathscr{C}$ to support the representation itself, see \cite{Delbaen:2012} for a thorough treatment of static (also dynamic) risk measures based of the corresponding coherent monetary utility functional $\phi=-\rho.$
Now, static coherent risk measures are in duality with static AIs since
    $$\alpha(X)= \sup\{x \in \rio_+ | \, \rho_x(X) \leqslant 0\},$$
where $\rho_x$ is an indexed family of coherent risk measures whose representation is supported by an $x$-increasing family $(\mathscr{D}_x)_{x \in \rio_+}$ of absolutely continuous probability measures,  together with the corresponding Radon-Nikodym derivatives and acceptance sets. In the present paper we do a similar construction by indexing the set
    $$\mathscr{D}_{\sigma}:= \left\{ A \in \mathscr{A}_+^1 \big| \; \norm{A}_{\mathscr{A}^1}=1\right\},$$
and then working with the bilinear form $\langle X,A \rangle$ rather than the classical expectation $\mean(X \cdot Z).$ Obviously, in this extended framework we have that
    $$\rho(X)= -\inf_{A \in \mathscr{D}_{\sigma}} \langle X,A \rangle, \;\;\; X \in \mathscr{R}^{\infty},$$
with corresponding $\sigma(\mathscr{R}^{\infty},\mathscr{A}^1)$-closed acceptance set $\mathscr{C}=\{X \in \mathscr{R}^{\infty} | \, \rho(X) \leqslant 0\},$ see \cite[Corollay 3.5]{CheriEtAl:2004}. In our paper we clearly attach a numerical acceptability level $x \in \rio_+$ to the supporting set $\mathscr{D}_{\sigma}$ to entail the duality with AIs for bounded c\`{a}dl\`{a}g processes. Finally, a few facts to note. First, the space $\mathscr{R}^{\infty}$ is invariant with respect to the probability measure $\proba$ or its equivalents. Second, for
$p,q \in [0,\infty]$ such that $p^{-1}+q^{-1}=1$ the space $\mathscr{A}^q$ can be identified with the topological dual $(\mathscr{R}^p)^*$ of the space $\mathscr{R}^p,$ while $\mathscr{A}^1 \subset (\mathscr{R}^{\infty})^*.$ Thus \cite[Theorem 3.3, Corollary 3.5]{CheriEtAl:2004} provide those coherent monetary utility functionals on $\mathscr{R}^{\infty}$ that can be represented with vectors of $\mathscr{A}^1.$

\noindent \textbf{Acknowledgements}\\
\noindent The authors are grateful to two anonymous Referees and an Associate Editor for
their helpful suggestions and critical comments, that help us to improve the presentation of the paper.

%--------------References BibTeX----------------

\end{document}